\numberwithin{equation}{section}
\newtheorem{lemma}{Lemma}[section]
\newtheorem{theorem}[lemma]{Theorem}
\newtheorem{corollary}[lemma]{Corollary}
\newtheorem{prop}[lemma]{Proposition}
\newtheorem*{theorem*}{Theorem}
\theoremstyle{definition}
\newtheorem{definition}{Definition}[section]
\newtheorem{remark}{Remark}[section]
\newcommand{\K}{{\mathbb K}}      \newcommand{\N}{{\mathbb N}}
      \newcommand{\R}{{\mathbb R}}
      \newcommand{\C}{{\mathbb C}}
\newcommand{\B}{{\mathcal B}}
\newcommand{\h}{\mathcal H}
\newcommand{\e}{\mathcal E}
\newcommand{\tr}{\operatorname{tr}}
              \newcommand{\Id}{\operatorname{Id}}
\newcommand{\beq}{\begin{eqnarray}}
\newcommand{\eeq}{\end{eqnarray}}
\newcommand{\setft}[1]{\mathrm{#1}}
\newcommand{\Lin}{\setft{L}}
\newcommand{\Bounded}{\mathcal{B}}
\newcommand{\Ball}{\setft{Ball}}
\newcommand{\uno}{1\!\!1}
\title[On the power of quantum entanglement in multipartite quantum XOR games]{On the power of quantum entanglement  in multipartite quantum XOR games}
\author{Marius Junge}
\address{Department of Mathematics\\University of Illinois at Urbana-Champaign\\1409 W. Green St. Urbana, IL 61891. USA}
\email{junge@math.uiuc.edu}
\author{Carlos Palazuelos}
\address{Instituto de Ciencias Matem\'aticas (ICMAT)\\Departamento de An\'alisis Matem\'atico y Matem\'atica Aplicada \\
Facultad de Ciencias Matem\'aticas \\ Universidad Complutense de Madrid \\
Madrid 28040. Spain}
\email{cpalazue@ucm.es}
\thanks{MJ is partially supported by the NSF grants DMS 1800872 and Raise-TAG183917. C.P. is partially supported by the MICINN project PID2020-113523GB-I00, by the grant Grant CEX2019-000904-S funded by MCINN/AEI/ 10.13039/501100011033 and by QUITEMAD+-CM, P2018/TCS4342, funded by Comunidad de Madrid}
\begin{document}

\addtolength{\parskip}{+1ex}

\keywords{Nonlocal games, Quantum XOR games, Operator spaces}

\maketitle

\begin{abstract}
In this paper we show that, given $k\geq 3$, there exist $k$-player quantum XOR games for which the entangled bias can be arbitrarily larger than the bias of the game when the players are restricted to separable strategies. In particular, quantum entanglement can be a much more powerful resource than local operations and classical communication to play these games. This result shows a strong contrast to the bipartite case, where it was recently proved that the entangled bias is always upper bounded by a universal constant times the one-way classical communication bias.
\end{abstract}

\section{Introduction and main results}

One of the aims of research in quantum information is to quantify the advantages arising from the use of quantum resources as opposed to the use of classical resources. It is known that quantum computers can solve certain problems exponentially faster than a classical computer and also that computing a function of two people's inputs can be done with exponentially less communication with quantum messages than with classical ones. 

Another prominent context where the problem of comparing quantum and classical resources has been deeply studied is that of nonlocal games. In such games, $k$ spacelike separated parties (players) $A_1,\cdots, A_k$ receive inputs $x_1,\cdots ,x_k$ from a verifier according to some fixed and known probability distribution $\pi$, and are required to produce outputs $a_1,\cdots, a_k$. There is a predicate function $V(a_1,\cdots, a_k|x_1,\cdots , x_k)$ specifying which output $a_1,\cdots, a_k$ are considered ``winning'' on inputs $x_1,\cdots , x_k$ while the others are ``losing''. The value of such a game is defined as the maximum success probability that the players can achieve, which obviously depends on the type of strategies the players are allowed to use.  

A very interesting (and arguably the simplest) type of games are XOR games. These are nonlocal games in which each player only provides the referee with a one-bit answer and the predicate function depends on the parity of the $k$ bits alone. Despite its simplicity these games are reach enough to show very interesting phenomena. In particular, there exist bipartite XOR games for which an entangled state shared by the players allows them to define strategies leading to a success probability strictly greater than what can be achieved by unentangled players \cite{CHSH69, CHTW04}. This phenomenon is known as ``Bell inequality violations''  in the quantum information literature and it has been key in the study of the foundations of quantum mechanics \cite{Bell64}  as well as in some of the most striking applications of quantum information such as device independent quantum cryptography \cite{BHK05, VaVi14} and random number generators \cite{Colbeck14, Pironio09}. However, the interest of nonlocal games goes beyond quantum information theory, since they are also crucial in complexity theory \cite{CHTW04, ItVi12, JNVWY, NaWr} and they have been also used very recently to study certain problems in functional analysis \cite{Fritz12, JNPPSW11, PaVi16}.

On the other hand, in order to test the power of entanglement compared to classical resources, one could seek for a more extreme behavior than that mentioned in the previous paragraph; namely some contexts where the use of entanglement provides some advantages with respect to the interchange of classical communication between the players.  At the end, shouldn't classical communication be considered a classical resource? Nevertheless, in this case one should immediately note that nonlocal games, in the way presented above, cannot be used for this purpose. Indeed, if we think of a two-player (Alice and Bob) game, where the players are allowed to interchange classical communication, after receiving her question Alice could send it to Bob as part of their strategy, allowing the players to act as a single person with access to both questions. It is then clear that no ``entangled strategy'' (nor any other type of strategy satisfying some basic physical requirements) could ever improve what the players can do with classical communication. On the other hand, one can easily imagine a purely quantum task to be solved (for instance, one where entanglement must be created) for which the only use of classical communication is completely useless, yet the use of an entanglement state allows the players to solve the task in a trivial way.  

There are still many natural contexts where the comparison between entanglement and classical communication becomes very relevant and highly non trivial.  One interesting problem is the ``distinguishability of quantum states and data hiding'' \cite{DLT02, EgWe02, MWW09},  where different parties $A_1,\cdots , A_k$ receive one among several $k$-partite states and they have to distinguish which one they got. We will study this problem from the point of view of the so called quantum XOR games, which were introduced in \cite{ReVi15}. In complete analogy with their classical counterpart,  a $k$-player quantum XOR game is described by means of a family of $k$-partite quantum states $(\rho_x)_{x=1}^N$, a family of signs $c=(c_x)_{x=1}^N\in \{-1,1\}^N$ and a probability distribution $p=(p_x)_{x=1}^N$ on $\{1,\cdots, N\}$.  As before, the game starts with the referee choosing one of the states $\rho_x$ (which will now play the role of a quantum question) according to the probability distribution $p$. Then, the referee sends register $H_{A_i}$ to the player $A_i$ for every $i=1,\cdots ,k$. After receiving the states, the players must answer an output, $a_i=\pm 1$ in the case of $A_i$. Then, the players win the game if $a_1\cdots a_k=c_x$. Quantum XOR games can be understood as a particular situation of the distinguishability of quantum states, where the parties are restricted to certain conditions (see \cite[Section 4]{CLP14} for details about the precise relation between both problems).  
 
Since the players receive now quantum states it is not a priori clear that classical communication can be of any help.  One could  then guess the existence of quantum XOR games for which the use of entanglement is arbitrarily ``better'' than the use of classical communication. More precisely, if $\beta^*(G)$ denotes the entangled bias\footnote{For some reasons that will become clear in Section \ref{Sec: correlations}, when working with XOR games one usually works with the bias $\beta=2P_{win}- 1$ rather than with the winning probability $P_{win}$.} of the game $G$ and  $\beta_{LOCC}(G)$ denotes the bias of the game when the players are allowed to interchange classical communication and perform local operations as part of their strategies (local operations should always be allowed to the players), one could wonder whether there exist quantum XOR games $G$ for which $\beta^*(G)$ is much greater than $\beta_{LOCC}(G)$. This question was recently studied in \cite{JKPV22} and it was proved that, somehow surprisingly, this cannot happen for bipartite games. That is, there exists a universal constant $K$ such that $\beta^*(G)\leq K \beta_{LOCC}(G)$ for any bipartite quantum XOR games, independently of the size of $G$. This result was obtained by first relating the quantities $\beta^*(G)$ and $\beta_{LOCC}(G)$ with certain norms defined on linear maps between operator spaces and, then, applying a noncommutative version of Grothendieck theorem.

The main result of this work shows that the situation drastically changes for games with more than two players. Indeed, we will show that in the multipartite setting, there exist quantum XOR games $G$ for which $\beta^*(G)$ can be arbitrarily larger than $\beta_{LOCC}(G)$. In fact, we will show the following stronger result:
\begin{theorem}\label{Main Thm}
For every $k\geq 3$ there exists a constant $C(k)$ depending only on $k$ with the following property: Given natural numbers $D$ and $m$ satisfying that $D\geq c m^4 \sqrt{\log m}$ (where $c$ is a universal constant), there exists a $k$-player quantum XOR game with local dimension $D$ such that $$\beta^*(G)\geq C(k)m^{\frac{k}{2}-1}(\log m)^{-\frac{k}{2}}\beta_{SEP}(G),$$where $\beta_{SEP}(G)$ denotes the bias of the game when the players are restricted to the use of the so called separable strategies.
\end{theorem}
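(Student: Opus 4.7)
I would translate the problem into an operator-space comparison for the game tensor
$$G=\sum_{x=1}^{N}p_{x}c_{x}\rho_{x}\in(\Herm{D})^{\otimes k}.$$
Following the encoding of multiplayer quantum XOR games introduced in \cite{ReVi15} and further developed in \cite{JKPV22}, the entangled bias $\beta^{*}(G)$ equals, up to universal constants, the norm of $G$ in an operator-space minimal tensor product of $k$ copies of $S_{1}^{D}$, while $\beta_{SEP}(G)$ agrees with the ``injective'' norm $\sup\{|\Tr(G\cdot(M_{1}\otimes\cdots\otimes M_{k}))|:\|M_{i}\|_{\infty}\leq 1\}$ over Hermitian contractions. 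The theorem thus reduces to exhibiting a Hermitian $k$-tensor whose two norms are separated by the claimed factor.

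\textbf{Random construction.} In the spirit of the multipartite Bell-violation constructions, I would take $p_{x}=1/N$, $c_{x}=\varepsilon_{x}$ (independent Rademachers) and $\rho_{x}=|\psi_{x}\rangle\langle\psi_{x}|$ with $|\psi_{x}\rangle\in(\C^{D})^{\otimes k}$ i.i.d.\ Haar- or Gaussian-random unit vectors. The number of questions $N$ is tied to the parameter $m$ by a polynomial relation, and the hypothesis $D\geq cm^{4}\sqrt{\log m}$ corresponds to the regime in which the concentration estimates for $k$-linear random forms and their suprema over matrix $\varepsilon$-nets become sharp.

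\textbf{Bounding the two biases.} For $\beta_{SEP}$ I would dualize and write the quantity as the supremum of a random multilinear Rademacher form over products of unit balls of $\Herm{D}$; a Dudley/Chevet chaining argument with nets of cardinality $e^{CD}$ in each slot yields $\beta_{SEP}(G)\lesssim C'(k)(\log m)^{k/2}$ with high probability, and this is the step that consumes the lower bound on $D$. For $\beta^{*}$ I would exhibit an explicit entangled strategy in which the players share a single large entangled (GHZ-type) state adapted to the random $|\psi_{x}\rangle$'s and measure ``coherently'' on this ancilla together with the received register; the bias then reduces to a signed random sum whose typical size, by a noncommutative Khintchine or fourth-moment estimate, is $\gtrsim C(k)\,m^{k/2-1}$. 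Dividing the two estimates gives the desired ratio.

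\textbf{Main obstacle.} The delicate step, in my view, is the lower bound on $\beta^{*}$: one must show that a \emph{single} explicit entangled strategy attains bias of order $m^{k/2-1}$ for the random $G$, which requires controlling the concentration of $k$-linear polynomials in independent random unit vectors and exploiting the multipartite entanglement structure of the $|\psi_{x}\rangle$ in a coordinated way across all $k$ players. The separable upper bound, though technical, reduces to fairly standard $\varepsilon$-net/chaining estimates once the operator-space framework is in place.
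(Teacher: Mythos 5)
Your reduction of $\beta^*(G)$ to the norm of $G$ in $S_1^D\otimes_{min}\cdots\otimes_{min}S_1^D$ matches the paper's (Lemma \ref{entangled bias}), but there are two genuine gaps. First, your description of $\beta_{SEP}(G)$ as $\sup\{|\Tr(G\,(M_1\otimes\cdots\otimes M_k))|:\|M_i\|_\infty\le 1\}$ is not correct: that quantity is the bias over \emph{product} measurements. A separable POVM has effects of the form $\sum_{i\in I}P_i^{1,a_1}\otimes\cdots\otimes P_i^{k,a_k}$, and the only control that the normalization $\sum_{i,a}P_i^{1,a_1}\otimes\cdots\otimes P_i^{k,a_k}=\uno$ gives on the correlation operators $A_i^j=\sum_{a_j}a_jP_i^{j,a_j}$ is $\sum_i\|A_i^1\|_{S_1}\cdots\|A_i^k\|_{S_1}\le D^k$ (Corollary \ref{upper bound LOCC bias}); the relevant dual quantity is therefore $D^k\,\|G\|_{S_\infty^D\otimes_\epsilon\cdots\otimes_\epsilon S_\infty^D}$, a pairing against \emph{trace-norm} contractions carrying an extra factor $D^k$, not a pairing against operator-norm contractions. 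An $\varepsilon$-net or chaining bound over products of Hermitian operator-norm contractions controls a strictly smaller set of strategies and would not yield the theorem with $\beta_{SEP}$ as stated.

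Second, and more fundamentally, your lower bound on $\beta^*$ is asserted rather than derived, and the proposed random game (i.i.d.\ random pure states $|\psi_x\rangle$ with Rademacher signs) is not known to produce the separation. Since no separation exists for $k=2$ (a noncommutative Grothendieck theorem, cf.\ Remark \ref{explanation NCGT}), any proof must exploit a genuinely $k\ge 3$ multilinear phenomenon; the paper's engine is the Bri\"et--Vidick--Pisier random tensor $\tau$ (Theorem \ref{Thm. Briet-Vidick}), whose $\ell_2$-injective norm is $O(m(\log m)^{k/2})$ while its operator norm is $\Omega(m^{k})$. The game tensor is $z=(u\otimes\cdots\otimes u)(\tau)$ for a random Gaussian embedding $u:\ell_2^{m^2}\to S_\infty^D$ with $\|u\|\le C/D$, and the min-norm lower bound is obtained by constructing a completely bounded left inverse $\eta:S_1^D\to S_\infty^m$ with $\|\eta\|_{cb}\le C\sqrt m$ that pulls $z$ back to $\tau$; the hypothesis $D\ge cm^4\sqrt{\log m}$ enters precisely in the Bernstein-type concentration needed to build $\eta$, not in any net bound for $\beta_{SEP}$. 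Your sketch contains no substitute for this mechanism: a ``GHZ state plus Khintchine'' argument gives no reason for the entangled bias to scale like $m^{k/2-1}$, you leave the relation between $N$, $m$ and $D$ unspecified, and you would also need to symmetrize the construction to obtain a self-adjoint $G$ (the paper does this at the cost of doubling $D$).
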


Since it is well known that every LOCC strategy is separable \cite{CLMOW14} (so, it holds that $\beta_{LOCC}(G)\leq \beta_{SEP}(G)$ for every game $G$), the previous result guarantees the existence of $k$-player quantum XOR games (for $k\geq 3$) for which the rato $\beta^*(G)/\beta_{LOCC}(G)$ can be arbitrarily large by just increasing $m$, so the size of the game $D$. Moreover, although the order $m^{\frac{k}{2}-1}(\log m)^{-\frac{k}{2}}$ is probably not optimal, it must be stressed that the scaling of the ratio $\beta^*(G)/\beta_{SEP}(G)$ given by Theorem \ref{Main Thm} is polynomial in the size of the game $D$, something needed for potential experimental realizations.

From a more general point of view, Theorem \ref{Main Thm} gives us new insights about the power of multipartite quantum entanglement. Indeed, while bipartite entanglement has been well researched and understood in the past decades, much less is known about the multipartite case. Our main result presents an extreme behavior of multipartite quantum entanglement showing that, even in very simple scenarios as it is that of quantum XOR games, it can be more powerful than the use of classical communication (independently of the amount of communication and the way this communication is interchanged). In fact, together with the main result in \cite{JKPV22} mentioned above which states that this cannot happen for bipartite quantum XOR games, Theorem \ref{Main Thm} shows that, when comparing with classical resources, quantum entanglement becomes a more extreme resource in the multipartite setting. There is here an analogy with the case of classical XOR games. Indeed, if $G$ is a bipartite classical XOR game and we denote by $\beta(G)$ and  $\beta^*(G)$ its unentangled bias\footnote{Note that here the only resource allowed to the players is the use of shared randomness and any interchange of  classical communication between them is forbidden.} and its entangled bias respectively, Tsirelson showed \cite{Tsirelson} that, according to the classical Grothendieck inequality, one has $\beta^*(G)\leq K_G^{\R}\beta(G),$ where $K_G^{\R}$ denotes the real Grothendieck constant. However, it was proved later, in \cite{PWPVJ08}, that for tripartite classical XOR games the ratio $\beta^*(G)/\beta(G)$ can be made arbitrarily large. Thus, Theorem \ref{Main Thm} can be understood as an analogous result to the main result in \cite{PWPVJ08}, when now entanglement improves, not just the use of shared randomness, but the much more powerful resource of classical communication. As we explained above, the prize to pay to show this phenomenon is that one must deal with non-classical games.

Finally, Theorem \ref{Main Thm} can also be understood as a counterexample of an extension of a certain noncommutative Grothendieck theorem. Indeed, the main result in \cite{JKPV22} shows that there exists a universal constant  $K$ such that $\beta^*(G)\leq K \beta_{LOCC}(G)$ for any bipartite quantum XOR game, independently of the size of $G$. In fact, the result presented in \cite{JKPV22} is even stronger, since there $\beta_{LOCC}(G)$ can be replaced by the one-way version of it, $\beta_{LOCC \rightarrow}(G)$, which satisfies $\beta_{LOCC \rightarrow}(G)\leq \beta_{LOCC}(G)$ for every game $G$. The main technical tool to prove that result is a noncommutative version of Grothendieck theorem for bilinear forms. Theorem \ref{Main Thm} implies that a similar theorem cannot hold for $k$-multilinear forms (see Remark \ref{explanation NCGT} for a detailed explanation about this point). 

The paper is organized as follows: In Section \ref{Sec: Preliminaries} we introduce some basic concepts about Banach spaces and operator spaces that we will use along the paper. Section \ref{Sec: correlations} is devoted to the introduction of quantum XOR games, as well as to developing the mathematical description of the different bias of these games depending on the resources allowed to the players. Finally, in Section \ref{Sec: Proofs} we prove the main results of the paper.

\section{Preliminaries}\label{Sec: Preliminaries}
In this section we will introduce some concepts about Banach spaces and operator spaces which will be used in this work. 

Let us start fixing some notation. Given linear spaces $X$ and $Y$, let $\Lin(X,Y)$ denote the set of linear maps from $X$ to $Y$ and let $X^\star=\Lin(X,\K)$ denote the algebraic dual space of $X$. When $X$ and $Y$ are Banach spaces, we denote by $\Bounded(X,Y)$ the Banach space of bounded linear maps from $X$ to $Y$, by $X^*=\Bounded(X,\K)$ the dual Banach space of $X$ and by $\Ball(X)$ the unit ball of $X$. In the particular case $X=Y$, we write $\Bounded(X)$. We also write $\Id_X\in \Bounded(X)$ for the identity map on $X$. In this work, the space $\C^n$  is always endowed with the Hilbertian norm and identified with $\ell_2^n$. The inner product between two elements $x,y\in\ell_2^n$ is denoted by $\langle x| y\rangle$. Given any natural number $N$, we denote $[N]=\{1,\cdots, N\}$.

Given a linear space $X$, $M_d(X)$ is the set of $d\times d$ matrices with entries in $X$; algebraically, $M_d(X)=M_d\otimes X$. When we simply write $M_d$ we always mean $M_d(\C)$. We write $\{e_{ij}\}$ for the canonical basis of $M_d$.

Given a complex Hilbert space $\h$, $S_p(\h)$ is the Schatten $p$-space for $1\leq p\leq \infty$ and $S_p^{\mathrm{sa}}(\h)$ is the space of self-adjoint operators which are in $S_p(\h)$. Recall that $S_p(\h)^*=S_q(\h)$ isometrically whenever $p,q\in (1,\infty)$ and $1/p+1/q=1$, where the dual action is given by $\langle A, B\rangle=tr(A^TB)$ for every $A\in S_p(\h)$ and $B\in S_q(\h)$.  We also have $S_\infty(\h)^*=S_1(\h)$ and $S_1(\h)^*=\mathcal B(\h)$ isometrically. In particular, if $\h$ has finite dimension, $S_p(\h)^*=S_q(\h)$ holds isometrically for every conjugate $p,q\in [1,\infty]$. When $\h$ is a finite-dimensional Hilbert space, we write $\uno_{\B(\h)}$ to denote the identity operator on $S_p(\h)$ for every $p$, since all the spaces $S_p(\h)$ and $\B(\h)$ are isomorphic.

We introduce some notions about tensor norms~\cite{Def93}. Given $k$ linear spaces $X_1,\cdots, X_k$, we denote by $X_1\otimes\cdots \otimes X_k$ its algebraic tensor product. If $X$ and $Y$ have finite dimension, one obtains the natural identification 
\begin{align}\label{eq:lin-tensor}
L(X,Y)=X^\star\otimes Y.
\end{align}

To see Eq. (\ref{eq:lin-tensor}), given a dual basis $\big((e_i)_i,\, (e^*_i)_i\big)$ of $\big(X,\, X^*\big)$, for any linear map $T:X\rightarrow Y$ we associate the tensor $\hat{T}=\sum_{i}e_i^*\otimes T(e_i)\in X^\star\otimes Y$. On the other hand, given a tensor $z=\sum_i x_i^*\otimes y_i\in X^\star\otimes Y$, the corresponding linear map is defined as $T_z(x)=\sum_i x_i^*(x) y_i$ for every $x\in X$.

The identification~(\ref{eq:lin-tensor}) can be made isometric. First, given $k$ Banach spaces $X_1,\cdots , X_k$, define the \emph{injective tensor norm} of $z\in X_1\otimes\cdots \otimes X_k$ as
\begin{align}\label{Def injective norm}
\|z\|_{X_1\otimes_\epsilon\cdots \otimes_\epsilon X_k}=\sup\left\{\left|\langle z, x_1^*\otimes \cdots \otimes \cdots x_k^*\rangle\right|:\, x_i^*\in\Ball(X_i^*),\, i\in [k]\right\}.
\end{align} We denote by $X_1\otimes_\epsilon\cdots \otimes_\epsilon X_k$ the Banach space defined by the completion of the space $X_1\otimes\cdots \otimes X_k$ equipped with the injective norm.

It is now easy to check that if $X$ and $Y$ are finite-dimensional Banach spaces, for every linear map $T:X\rightarrow Y$ we have 
\begin{align}\label{eq:lin-epsilon-norm}
\|T:X\rightarrow Y\|=\|\hat{T}\|_{X^*\otimes_\epsilon Y}.
\end{align}

The \emph{projective tensor norm} of $z\in X_1\otimes\cdots \otimes X_k$ is defined as
\begin{align}\label{Def projective norm}
\|z\|_{X_1\otimes_\pi\cdots \otimes_\pi X_k}=\inf\left\{\sum_{i=1}^N \|x_i^1\|\cdots \|x_i^k\|:z=\sum_{i=1}^N x_i^1\otimes \cdots \otimes x_i^k,\, N\in \N, \,  x_i^j\in X_j,\, j\in [k],\, i\in [N]\right\}.
\end{align}We denote by $X_1\otimes_\pi\cdots \otimes_\pi X_k$ the Banach space defined by the completion of the space $X_1\otimes\cdots \otimes X_k$ equipped with the projective norm.

Both the injective and the projective tensor norms satisfy the metric mapping property: Given Banach spaces $Y_j$ and linear maps $T_j:X_j \rightarrow Y_j$ for $j=1,\cdots, k$, it holds that
\begin{align}
\|T_1\otimes \cdots \otimes T_k:X_1\otimes_\alpha\cdots \otimes_\alpha X_k\rightarrow Y_1\otimes_\alpha\cdots \otimes_\alpha Y_k\|=\|T_1\|\cdots \|T_k\|,
\end{align}for $\alpha=\epsilon,\, \pi$. 

In addition, the injective and the projective norms are dual to each other. That is, for finite-dimensional Banach spaces $X_1, \cdots, X_k$, 
\begin{align}
(X_1\otimes_\pi\cdots \otimes_\pi X_k)^*=X_1^*\otimes_\epsilon\cdots \otimes_\epsilon X_k^*\hspace{0.3 cm}\text{isometrically}.
\end{align}

It follows from the previous equation that
\begin{align}\label{duality pi_epsilon}
|z^*(w)|\leq  \|z^*\|_{X_1^*\otimes_\epsilon\cdots \otimes_\epsilon X_k^*}\|w\|_{X_1\otimes_\pi\cdots \otimes_\pi X_k} 
\end{align}for every $z^*\in X_1^*\otimes\cdots \otimes X_k^*$ and $w\in X_1\otimes\cdots \otimes X_k$.

We also introduce some notions about operator spaces~\cite{EffrosRuanBook, PisierBook}. An operator space $X$ is a closed subspace of the space of all bounded operators on a complex Hilbert space $\h$. For any such subspace the operator norm on $\Bounded(\h)$ automatically induces a sequence of \emph{matrix norms} $\|\cdot\|_d$ on $M_d(X)$ via the inclusions  $M_d(X) \subseteq M_d(\Bounded(\h))\simeq \Bounded(\h^{\oplus d})$. 
In fact, according to Ruan's Theorem \cite{Ruan88}, an equivalent definition of an operator space can be given as a complex Banach space equipped with a sequence of matrix norms $(M_d(X),\|\cdot\|_d)$ satisfying certain conditions.

Given a linear map between operator spaces $T:X\rightarrow Y$, let $T_d$ denote the linear map
$$T_d:\,v=(v_{ij})\in M_d(X)\,\mapsto \,(\Id_{M_d} \otimes T)(v)=(T(v_{ij}))_{i,j}\in M_d(Y).$$
The map $T$ is said to be \emph{completely bounded} if its completely bounded norm if finite:
$$
\|T\|_{cb} \,:=\, \sup_{d\in\N} \|T_d\|\,<\, \infty.
$$
If $\|T\|_{cb}\leq 1$ the map is said to be a complete contraction.

Certain Banach spaces have a natural operator spaces structure (o.s.s.). This happens for the case of C$^*$-algebras which have a canonical inclusion in $\Bounded(\h)$ for a certain Hilbert space $\h$ obtained from the GNS construction (see for instance \cite[Theorem I.9.6]{Davison}). In particular, if we consider $\Bounded(\C^n)=S_\infty^n$, we obtain $M_d(S_\infty^n)=S_\infty^{dn}$ for every $d\geq 1$.

Given an operator space $X$ there is a natural o.s.s.\ on $X^*$, the dual space of $X$. The norms on $M_d(X^*)$ are specified through the identification (\ref{eq:lin-tensor}). This leads to the sequence of norms
\begin{align}\label{dual o.s.s.}
\|z\|_{M_d(X^*)}=\|T_z:X\rightarrow M_d\|_{cb},\qquad d\geq 1,
\end{align}which satisfies Ruan's theorem and, hence, defines an o.s.s. on $X^*$. In this way, we can endow $S_1(\h)$ with a natural o.s.s. as a dual space of $S_\infty(\h)$.

Given $k$ operator spaces $X_1,\cdots , X_k$, define the \emph{min norm} of $z\in X_1\otimes\cdots \otimes X_k$ as 
\begin{align}\label{Def min-norm}
\|z\|_{X\otimes_{min} \cdots \otimes_{min} X_k}=\sup \left\{\left\|(u_1\otimes\cdots \otimes u_k)(z)\right\|_{\B(\h_1\otimes \cdots\otimes \h_k)}:\, \|u_i:X_i\rightarrow \B(\h_i)\|_{cb}\leq 1, \, i\in [k]\right\},
\end{align}where the previous supremum runs over all possible complex Hilbert spaces $\h_1,\cdots, \h_k$. In fact, it can be seen that the supremum does not change if we restrict these spaces to be finite-dimensional. We denote by $X\otimes_{min} \cdots \otimes_{min} X_k$ the Banach space\footnote{Actually, this space has a natural o.s.s., but this will not be relevant in this work.} defined by the completion of the space $X_1\otimes\cdots \otimes X_k$ equipped with the min norm. 

The min norm satisfies the metric mapping property in the category of operator spaces. That is, given operator spaces $Y_j$ and linear maps $T_j:X_j \rightarrow Y_j$ for $j=1,\cdots, k$, it holds that
\begin{align}
\|T_1\otimes \cdots \otimes T_k:X_1\otimes_{min} \cdots \otimes_{min} X_k\rightarrow Y_1\otimes_{min} \cdots \otimes_{min} Y_k\|_{cb}=\|T_1\|_{cb}\cdots \|T_k\|_{cb}.
\end{align}

It can be seen that for every operator space $X$ and $z\in M_d(X)$, we have $$\|z\|_{M_d(X)}=\|z\|_{S_\infty^d\otimes_{min}X}.$$The min norm allows us to extend the identification (\ref{eq:lin-tensor}) to the category of operator spaces. Indeed, for every linear map between finite-dimensional operator spaces $T:X\rightarrow Y$ we have 
\begin{equation}\label{eq:lin-min-norm}
\|T:X\rightarrow Y\|_{cb}=\|\hat{T}\|_{X^*\otimes_{min} Y}.
\end{equation}

The particular case $X=S_1(\h)$, $Y=S_\infty(\h')$, where $\h$ and $\h'$ are finite-dimensional complex Hilbert spaces,  will be very relevant for us. Note that in this case $$\|T:S_1(\h)\rightarrow S_\infty(\h')\|_{cb}=\|\hat{T}\|_{S_\infty(\h)\otimes_{min} S_\infty(\h')}=\|\hat{T}\|_{S_\infty(\h\otimes \h')}.$$Here, according to Eq. (\ref{eq:lin-min-norm}) and the duality between $S_1(\h)$ and $S_\infty(\h)$, to a given a tensor $\hat{T}\in S_\infty(\h)\otimes_{min} S_\infty(\h')$ we associate the linear map $T:S_1(\h)\rightarrow S_\infty(\h')$ defined as
\begin{align}\label{Eq. Map-Tensor}
T(x)=(tr_{\mathcal B(\h)}\otimes \uno_{\B(\h')})\left(\hat{T}\cdot(x^T\otimes \uno_{\B(\h')})\right)\hspace{0.2 cm} \text{for every} \hspace{0.2 cm} x\in S_1(\h),
\end{align}where the product is defined as $(x\otimes y)\cdot (z\otimes w)=xz\otimes yw$. From now on, we just denote the \emph{partial trace} $(tr_{\mathcal B(\h)}\otimes \uno_{\B(\h')})=tr_{\mathcal B(\h)}$. Hence, the equation above is written as $T(x)=tr_{\mathcal B(\h)}\left(\hat{T}\cdot(x^T\otimes \uno_{\B(\h')})\right)$.

Since the space $S_1(\h_1)\otimes_{min}\cdots \otimes_{min}S_1(\h_k)$ will play an important role in this work, let us write the minimal norm of an element $x\in S_1(\h_1)\otimes\cdots \otimes S_1(\h_k)$ in detail:
\begin{lemma}\label{min on S_1}
Given $x\in S_1(\h_1)\otimes \cdots \otimes S_1(\h_k)$, we have 
\begin{align*}
\|x\|_{S_1(\h_1)\otimes_{min}\cdots \otimes_{min}S_1(\h_k)}=\sup\left \{\left|\left\langle\psi\left|tr_{\B(\h_1\otimes \cdots \otimes \h_k)}\left(U_1\otimes \cdots \otimes U_k\right)\left(x\otimes \uno_{\mathcal B(\h'_1\otimes\cdots \otimes \h'_k)}\right)\right|\eta\right\rangle\right|\right\},
\end{align*}where this supremum runs over all finite-dimensional complex Hilbert spaces $\h'_i$, all unitary matrices $U_i\in \mathcal B(\h_{i}\otimes \h'_i)$ for all $i=1,\cdots, k$ and all unit vectors $|\psi\rangle,\, |\eta \rangle\in \h'_1\otimes\cdots\otimes\h'_k$.
\end{lemma}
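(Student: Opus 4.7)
The plan is to unfold the definition of the min-tensor norm by invoking the identification (\ref{eq:lin-min-norm}) between completely bounded maps and tensors, and then to reduce the resulting supremum to unitaries via an extreme point argument.

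First, by Eq. (\ref{Def min-norm}) together with the fact that the operator norm on $\B(\h'_1\otimes \cdots \otimes \h'_k)$ equals the supremum of matrix coefficients against unit vectors, the quantity $\|x\|_{S_1(\h_1)\otimes_{min}\cdots \otimes_{min}S_1(\h_k)}$ is the supremum of $|\langle \psi|(u_1\otimes \cdots \otimes u_k)(x)|\eta\rangle|$ over finite-dimensional Hilbert spaces $\h'_i$, complete contractions $u_i:S_1(\h_i)\to \B(\h'_i)$, and unit vectors $\psi,\eta\in \h'_1\otimes \cdots \otimes \h'_k$. By the identification (\ref{eq:lin-min-norm}) applied to each factor and the explicit formula (\ref{Eq. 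Map-Tensor}), each complete contraction $u_i$ is in bijection with a contraction $\hat u_i\in \B(\h_i)\otimes_{min}\B(\h'_i)=\B(\h_i\otimes \h'_i)$ via $u_i(y)=tr_{\B(\h_i)}(\hat u_i(y^T\otimes \uno))$, and taking tensor products yields
\[
(u_1\otimes \cdots \otimes u_k)(x) = tr_{\B(\h_1\otimes \cdots \otimes \h_k)}\bigl((\hat u_1\otimes \cdots \otimes \hat u_k)(x^T\otimes \uno_{\B(\h'_1\otimes \cdots \otimes \h'_k)})\bigr),
\]
where $x^T$ denotes the componentwise transpose of $x$.

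Second, the above quantity is linear, hence convex, in each $\hat u_i$ with the other factors fixed, so the supremum over the closed unit ball of $\B(\h_i\otimes \h'_i)$ is attained at its extreme points. In finite dimensions these are exactly the unitaries, so we may restrict to $\hat u_i=U_i$ unitary, and this already gives the formula in the statement, except that $x^T$ appears in place of $x$.

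The main subtlety is then to absorb the transpose on $x$. This is done via the simultaneous substitutions $U_i\mapsto U_i^T$ (the transpose of a unitary is a unitary) and $(\psi,\eta)\mapsto (\bar\eta,\bar\psi)$ (complex conjugation of unit vectors), both of which are bijections on the respective sets. The identity
\[
\langle \psi|tr_{\B(\h_1\otimes \cdots \otimes \h_k)}(U(x\otimes \uno))|\eta\rangle = \langle \bar\eta|tr_{\B(\h_1\otimes \cdots \otimes \h_k)}(U^T(x^T\otimes \uno))|\bar\psi\rangle,
\]
with $U=U_1\otimes \cdots \otimes U_k$, follows from the cyclicity of the trace together with $tr(AB)=tr(A^TB^T)$ applied with $B=x\otimes |\eta\rangle\langle \psi|$, using that $(|\eta\rangle\langle \psi|)^T = |\bar\psi\rangle\langle \bar\eta|$. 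Taking the supremum on both sides converts the formula with $x^T$ into the formula in the statement.
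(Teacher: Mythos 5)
Your proof is correct and follows essentially the same route as the paper: unfold the min norm over complete contractions $S_1(\h_i)\to\B(\h'_i)$, identify these with contractions in $\B(\h_i\otimes\h'_i)$ via Eq.~(\ref{Eq. Map-Tensor}), pass to unitaries by convexity (you use the extreme-point characterization of the unit ball of a matrix algebra where the paper invokes Russo--Dye, which is the same argument), and remove the transpose on $x$ by the substitution $U_i\mapsto U_i^T$. The only difference is that you spell out the transpose-absorption identity, including the conjugation of the vectors $\psi,\eta$, which the paper dismisses as ``easy to see''; your verification is correct.
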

\begin{proof}
According to the definition of the min norm (\ref{Def min-norm}), we have 
\begin{align*}
\|x\|_{S_1(\h_1)\otimes_{min}\cdots \otimes_{min}S_1(\h_k)}=\sup\left \{\left\|(U_1\otimes \cdots \otimes U_k)(x)\right\|_{S_\infty(\h'_1\otimes\cdots \otimes \h'_k)}\right\},
\end{align*}where this supremum runs over all finite-dimensional complex Hilbert spaces $\h'_i$ and all completely contractive maps $u_i:S_1(\h_i)\rightarrow S_\infty(\h'_i)$, $i=1,\cdots, k$. Moreover, according to the Eq. (\ref{Eq. Map-Tensor}) we can write the previous expression as  
\begin{align*}
\sup\left \{\left\|tr_{\B(\h_1\otimes \cdots \otimes \h_k)}\left(\left(U_1\otimes \cdots \otimes U_k\right)\left(x^T\otimes  \uno_{\mathcal B(\h'_1\otimes\cdots \otimes \h'_k)}\right)\right)\right\|_{S_\infty(\h'_1\otimes\cdots \otimes \h'_k)}\right\},
\end{align*}where now $U_i\in \mathcal B(\h_{i}\otimes \h'_i)$ is a contraction for all $i=1,\cdots, k$. Now, according to Russo-Dye theorem (\cite[Theorem I.8.4]{Davison}), by convexity, we can assume that the $U_i$'s are unitary matrices in the previous supremum. Moreover, using that replacing $U_i$ by $U_i^T$ doesn't change this supremum, it is easy to see that the previous expression can be written as
\begin{align*}
\sup\left \{\left\|tr_{\B(\h_1\otimes \cdots \otimes \h_k)}\left(\left(U_1\otimes \cdots \otimes U_k\right)\left(x\otimes  \uno_{\mathcal B(\h'_1\otimes\cdots \otimes \h'_k)}\right)\right)\right\|_{S_\infty(\h'_1\otimes\cdots \otimes \h'_k)}\right\}.
\end{align*} Taking the supremum over unit vectors $|\psi\rangle,\, |\eta \rangle\in \h'_1\otimes\cdots\otimes \h'_k$ is now obvious from the definition of the operator norm.
\end{proof}

Finally, let us recall some basic results about Gaussian variables. Let $\{g_{i,j}\}_{i,j=1}^{N,M}$ be a family of independent normal real Gaussian variables. The following estimates are very well known:
\begin{small}\begin{align}
\mathbb E\left\|\sum_{i,j=1}^{N,M}g_{i,j}e_i\otimes e_j\right\|_{S_\infty^{N,M}}\leq C\min\left\{\sqrt{N}, \sqrt{M}\right\}, \hspace{0.3 cm} \mathbb E\left\|\sum_{i,j=1}^{N,M}g_{i,j}e_i\otimes e_j\right\|_{S_1^{N,M}}\leq C\sqrt{NM}\min\left\{\sqrt{N}, \sqrt{M}\right\},
\end{align}\end{small}where here $C$ is a universal constant.

We will also make use of Chevet's inequality \cite[Theorem~43.1]{Tomczak}.
\begin{theorem}[Chevet's inequality]\label{Chevet}
Let $X$ and $Y$ be Banach spaces. Define the Gaussian random tensor $z=\sum_{i=1}^m \sum_{j=1}^n g_{ij}x_i\otimes y_j\in X\otimes Y$, where $(g_{ij})_{i,j}$ are independent normal real Gaussian variables, and $(x_i)_{i=1}^m \subset X$, $(y_j)_{j=1}^n\subset Y$ are sequences of elements. Then,
\begin{align*}
\mathbb E \left\|z\right\|_{X \otimes_{\e} Y}\leq b\left(\sup_{x^*\in B_{X^*}}\left(\sum_{i=1}^m |x^*(x_i)|^2\right)^{\frac{1}{2}} \mathbb E\left\|\sum_{j=1}^n g_j y_j \right\|_{Y}+ \sup_{y^*\in B_{Y^*}}\left(\sum_{j=1}^n |y^*(y_j)|^2\right)^{\frac{1}{2}}  \mathbb E\left\|\sum_{i=1}^m g_i x_i \right\|_{X}\right),
\end{align*}where $b=1$ if the spaces $X$ and $Y$ are real and $b=4$ if they are complex and $(g_{i})_i$ is a sequence of independent normal real Gaussian random variables.
\end{theorem}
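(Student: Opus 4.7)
The plan is to reduce the statement to a question about tensor norms on $k$-fold tensor products of trace-class spaces and then produce a random Gaussian game that realises the desired separation, with one tensor norm controlled by iterated Chevet and the other witnessed by explicit entangled strategies.

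First I would establish (in the development preceding the proof) two identifications that are suggested by Lemma \ref{min on S_1}. If $G$ has game tensor $\hat G = \sum_x p_x c_x \rho_x \in S_1(\h_1)\otimes\cdots\otimes S_1(\h_k)$, then up to an absolute multiplicative constant
\[
\beta^*(G) \;\simeq\; \|\hat G\|_{S_1(\h_1)\otimes_{min}\cdots\otimes_{min}S_1(\h_k)}, \qquad \beta_{SEP}(G) \;\simeq\; \|\hat G\|_{S_1(\h_1)\otimes_\epsilon\cdots\otimes_\epsilon S_1(\h_k)}.
\]
The first equivalence is exactly what Lemma \ref{min on S_1} is designed for: the unitaries $U_i$ and the ancillary unit vectors $|\psi\rangle,|\eta\rangle$ encode the local Stinespring dilations and the shared entangled state of an entangled strategy. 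The second equivalence reflects the fact that a separable strategy is a convex combination of product POVMs, and pairing $\hat G$ with a product POVM is precisely the dual pairing that defines the injective norm via \eqref{Def injective norm}. Hence the theorem reduces to producing, for given $k,m$, a self-adjoint Hermitian tensor whose min-norm exceeds its injective norm by a factor of order $m^{k/2-1}(\log m)^{-k/2}$, and then rewriting it as a genuine quantum XOR game on local dimension $D$.

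Next I would take the game tensor to be Gaussian. Fix $D \geq c m^{4}\sqrt{\log m}$ and set, in $S_1^{D}\otimes\cdots\otimes S_1^{D}$,
\[
\hat G \;=\; \sum_{(i_1,j_1),\ldots,(i_k,j_k)\in[m]^{2}} g_{i_1 j_1,\ldots,i_k j_k}\; e_{i_1 j_1}\otimes\cdots\otimes e_{i_k j_k},
\]
with $\{g_{\cdot}\}$ independent standard real Gaussians, and symmetrise so that $\hat G$ is self-adjoint. The tensor is then converted into a bona fide collection $(p_x, c_x, \rho_x)$ by the standard procedure of decomposing the Hermitian form $\hat G$ along its spectral resolution in each $S_1^{D}$ factor; the hypothesis $D\gtrsim m^4\sqrt{\log m}$ provides enough ambient room for this reduction and for the concentration estimates below. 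For the upper bound on $\beta_{SEP}(G)$ I would apply Chevet's inequality (Theorem \ref{Chevet}) iteratively, peeling off one factor at a time. At each peeling step one uses the Gaussian estimates $\mathbb E\|\sum g_{ij}e_{ij}\|_{S_1^m}\lesssim m^{3/2}$ and the bound $(\log m)^{1/2}$ coming from the $\ell^2$-normalisation of the basis vectors of the dual $S_\infty^m$ balls. After $k-1$ iterations one obtains $\mathbb E\|\hat G\|_{\epsilon}\lesssim C(k)\,(\log m)^{k/2}\,m^{(k+2)/2}$, and hence $\beta_{SEP}(G)\lesssim C(k)(\log m)^{k/2}m^{(k+2)/2}$ with high probability.

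For the lower bound on $\beta^*(G)$ I would exhibit an explicit (random) strategy achieving min-norm of order $m^{k-1}$ times the injective-norm bound divided by $(\log m)^{k/2} m$. In Lemma \ref{min on S_1}, take each $\h_i'=\C^m$, choose $|\psi\rangle=|\eta\rangle$ to be the maximally entangled state on $\bigotimes_i(\h_i'\otimes\h_i')$, and take the unitaries $U_i$ to be the natural swap/identification that pairs the $i$-th register of the game with the ancillary copy. With this choice the pairing in Lemma \ref{min on S_1} collapses to a Hilbert--Schmidt-like quadratic form in the Gaussian coefficients, which concentrates at $\gtrsim m^{k-1}$ by standard chi-square concentration. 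Dividing the two estimates yields the ratio $m^{k/2-1}(\log m)^{-k/2}$ claimed in Theorem \ref{Main Thm}. I expect the main technical obstacle to be making this min-norm witness rigorous: one needs to show that, after the random Gaussian tensor is symmetrised and embedded into $S_1^D$, the specific entangled strategy above still achieves the advertised value with high probability, and moreover that the separable upper bound from Chevet is not inadvertently loose by more than the stated logarithmic factors. The lower bound on the dimension, $D\gtrsim m^4\sqrt{\log m}$, is the precise amount of ambient space required for both concentration arguments to be uniform over the chosen random strategies.
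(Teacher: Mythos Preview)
Your proposal does not address the stated theorem. Theorem~\ref{Chevet} is Chevet's inequality, a classical result in Banach space theory; the paper does not prove it but simply quotes it from \cite[Theorem~43.1]{Tomczak} and uses it as a black box. What you have written is instead a sketch of a proof of Theorem~\ref{Main Thm} (the main separation result for multipartite quantum XOR games). These are entirely different statements: Chevet's inequality is a general upper bound on the expected injective norm of a Gaussian tensor in $X\otimes_\epsilon Y$, while Theorem~\ref{Main Thm} asserts the existence of games with a large $\beta^*/\beta_{SEP}$ ratio. Nothing in your write-up bounds $\mathbb{E}\|z\|_{X\otimes_\epsilon Y}$ for a general pair of Banach spaces, so as a proof of the stated theorem it is simply off-target.

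Even read as an outline for Theorem~\ref{Main Thm}, your sketch diverges from the paper's argument and contains gaps. The paper does \emph{not} bound $\beta_{SEP}$ via the injective norm on $S_1^D\otimes_\epsilon\cdots\otimes_\epsilon S_1^D$; rather, via Corollary~\ref{upper bound LOCC bias} and duality it reduces to the injective norm on $S_\infty^D\otimes_\epsilon\cdots\otimes_\epsilon S_\infty^D$ (note the factor $D^k$ that appears). The paper also does not apply Chevet iteratively to a single $k$-fold Gaussian tensor: the $k$-fold injective-norm bound comes from the Bri\"et--Vidick estimate (Theorem~\ref{Thm. Briet-Vidick}), and Chevet is used only once, to control the random embedding $u:\ell_2^{m^2}\to S_\infty^D$. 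The min-norm lower bound is obtained not from a maximally entangled witness but by building a completely bounded left inverse $\eta:S_1^D\to S_\infty^m$ to $u$ (this is where Proposition~\ref{prop_gaussians} and Lemma~\ref{Auerbach_operator-spaces} enter and where the condition $D\gtrsim m^4\sqrt{\log m}$ is actually used), so that $(\eta\otimes\cdots\otimes\eta)(z)=\tau$ and the operator norm of $\tau$ gives the lower bound. Your proposed witness via swap unitaries and chi-square concentration is not the mechanism the paper uses, and you give no argument that it would survive the self-adjoint symmetrisation and the embedding into dimension $D$.
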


Note that, given a Banach space $Z$ and $(z_i)_{i=1}^n\subset Z$, we have $$\sup_{z^*\in B_{Z^*}}\left(\sum_{i=1}^n |z^*(z_i)|^2\right)^{\frac{1}{2}}=\|T:\ell_2^n\rightarrow Z\|,$$where $T$ is the linear map defined by $T(e_i) =z_i$ for every $i=1,\cdots, n$, and $(e_i)_i$ is any orthonormal basis of $\ell_2^n$.

\section{Quantum XOR games and strategies}\label{Sec: correlations}

\subsection{Quantum XOR games}

A $k$-player quantum XOR game is described by means of a family of $N$ different $k$-partite quantum states $(\rho_x)_{x=1}^N$, a family of signs $c=(c_x)_{x=1}^N\in \{-1,1\}^N$ and a probability distribution $p=(p_x)_x$ on $\{1,\cdots, N\}$. Here a $k$-partite quantum state $\rho$ is a trace-one positive semidefinite operator acting on the tensor product of $k$ finite-dimensional complex Hilbert spaces $\h_{A_1}\otimes \cdots \otimes \h_{A_k}$.

In order to understand the game, we can think of $k$ players (spatially separated) and a referee. The game starts with the referee choosing one of the states $\rho_x$ according to the probability distribution $p$. Then, the referee sends the corresponding part of the state $\rho_x$ to any of the players (this should be understood as a quantum question). After receiving the states, the players must answer an output, $a_i=\pm 1$ in the case of the player $A_i$. Then, the players win the game if $a_1\cdots a_k=c_x$. These games were first considered in \cite{ReVi15} for the case of two players as a natural generalization of classical XOR games, which have a great relevance in both quantum information and computer science. As we will see below, the relevant information of the game is encoded in the operator
\begin{align}\label{XOR quantum-operator}
G=\sum_{x=1}^Nc_xp_x \rho_x,
\end{align} which is a self-adjoint operator acting on $\h_{A_1}\otimes \cdots \otimes \h_{A_k}$ such that $\|G\|_{S_1(\h_{A_1}\otimes \cdots \otimes  \h_{A_k})}\leq 1$. That is, for any quantum XOR game the element $G$ defined in Eq. (\ref{XOR quantum-operator}) belongs to the unit ball of $S_1^{\mathrm{sa}}(\h_{A_1}\otimes \cdots \otimes  \h_{A_k})$. It is easy to see that the converse is also true: any self-adjoint operator $G$ such that $\|G\|_{S_1(\h_{A_1}\otimes \cdots \otimes  \h_{A_k})}\leq 1$ admits a decomposition as in Eq. (\ref{XOR quantum-operator}). Thus, the set of quantum XOR games can be identified with the unit ball of $S_1^{\mathrm{sa}}(\h_{A_1}\otimes \cdots \otimes  \h_{A_k})$.

For our purposes it suffices to notice that for any element $x\in S_1^{\mathrm{sa}}(\h_{A_1}\otimes \cdots \otimes  \h_{A_k})$, we can define a quantum XOR game by just considering: $$G=\frac{x}{\|x\|_{S_1(\h_{A_1}\otimes \cdots \otimes  \h_{A_k})}}.$$ 

In fact, it should be noted that the decomposition of a given $G$ as in Eq. (\ref{XOR quantum-operator}) is not in general unique. Therefore, an element $G$ in the unit ball of $S_1^{\mathrm{sa}}(\h_{A_1}\otimes \cdots \otimes  \h_{A_k})$ can describe different games. It is very easy to see that one such $G$ can always be understood as a quantum XOR game with only two quantum questions ($N=2$). Note that, since the Hilbert spaces $\h_{A_i}$'s are finite-dimensional, we have $S_1(\h_{A_1}\otimes \cdots \otimes  \h_{A_k})=S_1(\h_{A_1})\otimes \cdots \otimes  S_1(\h_{A_k})$.

A \emph{strategy} for $k$ players $A_1, \cdots , A_k$ is a linear map $\mathcal P:\mathcal B(\h_{A_1}\otimes \cdots \otimes  \h_{A_k})\rightarrow \R_+^{2^k}$ such that, for any given state $\rho$ acting on $\h_{A_1}\otimes \cdots \otimes  \h_{A_k}$, it assigns a probability distribution over the possible answers $$\mathcal P(\rho)=\mathcal P(a_1,\cdots ,a_k|\rho)_{a_1,\cdots ,a_k=\pm 1}.$$Note that, for a fixed strategy, the probability of winning the game is
\begin{align*}
\mathbf{P}_{win}(G;\mathcal P)&=\sum_{x:c_x=1}p_x\sum_{\substack{a_1,\cdots ,a_k:\\\substack{a_1\cdots a_k=1}}}P(a_1,\cdots ,a_k|\rho_x)\\&+\sum_{x:c_x=-1}p_x\sum_{\substack{a_1,\cdots ,a_k:\\\substack{a_1\cdots a_k=-1}}}P(a_1,\cdots ,a_k|\rho_x).
\end{align*}

It is easy to see that if all the players answer randomly (somehow the most naive strategy); that is $\mathcal P(a_1,\cdots ,a_k|\rho_x)=1/2^k$ for every $a_1,\cdots ,a_k=\pm 1$ and every $\rho_x$, then $\mathbf{P}_{win}(G;\mathcal P)=1/2$. Hence, when working with XOR games, it is very common to study the so called  \emph{bias} of the game, $\beta (G; \mathcal P)=\mathbf{P}_{win}(G;\mathcal P)-1/2$ or, equivalently (up to a multiplicative constant $2$),
\begin{align*}
&\mathbf{P}_{win}(G;\mathcal P)-\mathbf{P}_{lose}(G;\mathcal P)\\&=\sum_{x:c_x=1}p_x\sum_{\substack{a_1,\cdots ,a_k:\\\substack{a_1\cdots a_k=1}}}P(a_1,\cdots ,a_k|\rho_x)+\sum_{x:c_x=-1}p_x\sum_{\substack{a_1,\cdots ,a_k:\\\substack{a_1\cdots a_k=-1}}}P(a_1,\cdots ,a_k|\rho_x)\\&-
\sum_{x:c_x=1}p_x\sum_{\substack{a_1,\cdots ,a_k:\\\substack{a_1\cdots a_k=-1}}}P(a_1,\cdots ,a_k|\rho_x)-\sum_{x:c_x=-1}p_x\sum_{\substack{a_1,\cdots ,a_k:\\\substack{a_1\cdots a_k=1}}}P(a_1,\cdots ,a_k|\rho_x)\\&=\sum_{x=1}^Np_xc_x\sum_{a_1,\cdots ,a_k=\pm 1}a_1\cdots a_kP(a_1,\cdots ,a_k|\rho_x).
\end{align*}

Then, we see that in order to compute the bias of the game $G$ the only relevant part of the strategies are the correlations. That is, given a strategy $\mathcal P$ and a state $\rho$, if we define $$\gamma(\rho)=\sum_{a_1,\cdots ,a_k=\pm 1}a_1\cdots a_kP(a_1,\cdots ,a_k|\rho),$$we have
\begin{align}\label{bias}
\beta (G; \mathcal P)=\sum_{x=1}^Np_xc_x\gamma(\rho_x).
\end{align}

\subsection{Entangled strategies}

As the reader may guess, the winning probability of the game (and so its bias) will strongly depend on the type of the strategies, which are determined by the \emph{resources} allowed to the players to play the game. In this work, we are particularly interested in two type of strategies. The first one, usually called \emph{entangled} strategies, are those where the players are allowed to use a $k$-player quantum state. In this case, which was deeply studied in \cite{ReVi15} in the bipartite case, a strategy is defined by some POVMs $(E_1^{a_1})_{a_1=\pm 1},\cdots , (E_k^{a_k})_{a_k=\pm 1}$ acting on $\h_{A_1}\otimes \h_{A_1'},\cdots , \h_{A_k}\otimes \h_{A_k'}$ respectively and a $k$-player quantum states $\rho_{A_1'\cdots A_k'}$ acting on $\h_{A_1'}\otimes\cdots \otimes \h_{A_k'}$  such that 
\begin{align}\label{Entangled strategy}
P(a_1,\cdots ,a_k|\rho)=tr\left((E_1^{a_1}\otimes \cdots \otimes E_k^{a_k})(\rho\otimes \rho_{A_1'\cdots A_k'})\right).
\end{align}

Then, one can check that the entangled bias of the game $G$; that is, the largest bias of the game under these kinds of strategies, can be expressed as
\begin{align}\label{Entangled bias eq}
\beta^*(G)=\sup\left\{tr\left((A_1\otimes \cdots \otimes A_k)(G\otimes \rho_{A_1'\cdots A_k'})\right)\right\},
\end{align}where the supremum runs over all finite-dimensional Hilbert spaces $\h_{A_1'},\cdots ,\h_{A_k'}$, all $k$-player quantum states $\rho_{A_1'\cdots A_k'}$ acting on $\h_{A_1'}\otimes\cdots \otimes \h_{A_k'}$ and all self-adjoint operators $A_1,\cdots, A_k$ acting on $\h_{A_1}\otimes \h_{A_1'},\cdots , \h_{A_k}\otimes \h_{A_k'}$ respectively with norm lower than or equal to one.

\begin{lemma}\label{entangled bias}
Given a $k$-player quantum XOR game defined by the element $G\in S_1^{\mathrm{sa}}(\h_{A_1}\otimes \cdots \otimes  \h_{A_k})$, we have 
$$\beta^*(G)=\|G\|_{S_1(\h_{A_1})\otimes_{min}\cdots\otimes_{min}S_1(\h_{A_k})}.$$
\end{lemma}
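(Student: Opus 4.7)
The plan is to prove the identity by establishing both inequalities. Both directions are bridged by Lemma \ref{min on S_1}, the variational description of $\|G\|_{S_1(\h_{A_1})\otimes_{min}\cdots\otimes_{min}S_1(\h_{A_k})}$, which I abbreviate as $\|G\|_{min}$ below.

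For the upper bound $\beta^*(G)\leq \|G\|_{min}$, I would exploit the fact that every self-adjoint contraction is the real part of a unitary: given the spectral decomposition $A_i=\sum_k\lambda_k^i|k\rangle\langle k|_i$ with $|\lambda_k^i|\leq 1$, the unitary $U_i:=\sum_k(\lambda_k^i+i\sqrt{1-(\lambda_k^i)^2})|k\rangle\langle k|_i$ satisfies $A_i=(U_i+U_i^*)/2$. By linearity of the bias in $\rho$ one may restrict to pure shared states $\rho=|\Phi\rangle\langle\Phi|$. Expanding
$$A_1\otimes\cdots\otimes A_k=\frac{1}{2^k}\sum_{\epsilon\in\{0,1\}^k}U_1^{(\epsilon_1)}\otimes\cdots\otimes U_k^{(\epsilon_k)},$$
with $U_i^{(0)}:=U_i$ and $U_i^{(1)}:=U_i^*$, one obtains
$$tr\!\big((A_1\otimes\cdots\otimes A_k)(G\otimes|\Phi\rangle\langle\Phi|)\big)=\frac{1}{2^k}\sum_\epsilon\big\langle\Phi\big|tr_{\B(\h_{A_1}\otimes\cdots\otimes\h_{A_k})}\!\big((U_1^{(\epsilon_1)}\otimes\cdots\otimes U_k^{(\epsilon_k)})(G\otimes\uno)\big)\big|\Phi\big\rangle.$$
Each of the $2^k$ summands is bounded in absolute value by $\|G\|_{min}$ by applying Lemma \ref{min on S_1} with $|\psi\rangle=|\eta\rangle=|\Phi\rangle$ and unitaries $U_i^{(\epsilon_i)}$, so the triangle inequality yields $\beta^*(G)\leq\|G\|_{min}$.

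For the lower bound $\beta^*(G)\geq\|G\|_{min}$, I would convert a near-optimal quadruple $(U_1,\ldots,U_k,|\psi\rangle,|\eta\rangle)$ realizing the supremum in Lemma \ref{min on S_1} into an entangled strategy of essentially the same value via an ancilla-dilation trick. After multiplying $|\psi\rangle$ by a phase one may assume
$$\lambda:=\big\langle\eta\big|tr_{\B(\h_{A_1}\otimes\cdots\otimes\h_{A_k})}\!\big((U_1\otimes\cdots\otimes U_k)(G\otimes\uno)\big)\big|\psi\big\rangle\in[0,\infty).$$
Enlarge each auxiliary space by a qubit ancilla, $\tilde\h_{A_i'}:=\C^2\otimes\h_{A_i'}$, and introduce the self-adjoint contraction
$$A_i:=|0\rangle\langle 1|_{\text{anc}}\otimes U_i+|1\rangle\langle 0|_{\text{anc}}\otimes U_i^*$$
on $\h_{A_i}\otimes\tilde\h_{A_i'}$, together with the pure shared state $\rho=|\Phi\rangle\langle\Phi|$ given by $|\Phi\rangle:=\tfrac{1}{\sqrt 2}\big(|0\rangle^{\otimes k}\otimes|\eta\rangle+|1\rangle^{\otimes k}\otimes|\psi\rangle\big)$. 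Expanding $A_1\otimes\cdots\otimes A_k$ as $\sum_{\epsilon\in\{0,1\}^k}|\epsilon\rangle\langle 1-\epsilon|_{\text{anc}}\otimes U_1^{(\epsilon_1)}\otimes\cdots\otimes U_k^{(\epsilon_k)}$ and noting that the off-diagonal ancilla structure of $|\Phi\rangle\langle\Phi|$ annihilates every $\epsilon\notin\{0^k,1^k\}$, a direct computation reduces $tr((A_1\otimes\cdots\otimes A_k)(G\otimes\rho))$ to $\tfrac12(\lambda+\overline\lambda)=\lambda=|\lambda|$. Taking the supremum over all such data gives the desired inequality.

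The only real bookkeeping lies in the lower bound: one must verify that all $2^k-2$ cross terms in the expansion of $A_1\otimes\cdots\otimes A_k$ vanish when paired against $|\Phi\rangle\langle\Phi|$, which is immediate from orthogonality of computational-basis strings.
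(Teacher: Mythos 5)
Your proof is correct and follows essentially the same route as the paper: the lower bound uses the identical ancilla-qubit dilation $A_i=|0\rangle\langle 1|\otimes U_i+|1\rangle\langle 0|\otimes U_i^*$ with the superposition state $\tfrac{1}{\sqrt2}(|0\rangle^{\otimes k}|\eta\rangle+|1\rangle^{\otimes k}|\psi\rangle)$ and the same phase adjustment and $G=G^*$ computation. The only cosmetic difference is in the upper bound, where you average each observable as $\tfrac12(U_i+U_i^*)$ to reduce to the unitary form of Lemma \ref{min on S_1}, while the paper simply notes the inequality is immediate since self-adjoint contractions are among the contractions allowed in the min-norm supremum; both are valid.
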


The proof of this lemma is completely analogous to \cite[Claim 4.14]{ReVi15} in the bipartite case. We include the proof here for completeness.
\begin{proof}
According to Lemma \ref{min on S_1}, we have that 
\begin{align}
\|G\|_{S_1(\h_1)\otimes_{min}\cdots \otimes_{min}S_1(\h_k)}&=\sup\left \{\left|\left\langle\psi\left|tr_{\B(\h_1\otimes \cdots \otimes \h_k)}\left(\left(U_1\otimes \cdots \otimes U_k\right)\left(G\otimes \uno_{\mathcal B(\h'_1\otimes\cdots \otimes \h'_k)}\right)\right)\right|\eta\right\rangle\right|\right\}\\&=\sup\left \{\left| tr\left(\left(U_1\otimes \cdots \otimes U_k\right)\left(G\otimes |\eta\rangle\langle\psi|\right)\right)\right|\right\},
\end{align}where this supremum runs over all finite-dimensional complex Hilbert spaces $\h'_i$ and all unitary matrices $U_i\in \mathcal B(\h_{i}\otimes \h'_i)$ for all $i=1,\cdots, k$ and all unit vectors $|\psi\rangle,\, |\eta \rangle\in \h'_1\otimes\cdots\otimes \h'_k$. Hence, according to Eq. (\ref{Entangled bias eq}),  it is clear that $$\beta^*(G)\leq \|G\|_{S_1(\h_{A_1})\otimes_{min}\cdots\otimes_{min}S_1(\h_{A_k})}.$$

In order to prove the converse inequality, given some Hilbert spaces $\h'_i$, some unitary matrices $U_i\in \mathcal B(\h_{i}\otimes \h'_i)$ for all $i=1,\cdots, k$ and unit vectors $|\psi\rangle,\, |\eta \rangle\in \h'_1\otimes\cdots\otimes \h'_k$, first note that we can assume that $$\alpha:=\left\langle\psi\left|tr_{\B(\h_1\otimes \cdots \otimes \h_k)}\left(\left(U_1\otimes \cdots \otimes U_k\right)\left(G\otimes \uno_{\mathcal B(\h'_1\otimes\cdots \otimes \h'_k)}\right)\right)\right|\eta\right\rangle
$$ is a non-negative real number. Indeed, if this is not the case we just define $|\tilde{\eta}\rangle=\frac{\alpha}{|\alpha|}|\eta\rangle$, which is again a unit vector, and obtain that the corresponding expression equals $|\alpha|$.

Then, we can define 
$$\tilde{U}_i=\left(
\begin{array}{cc}
0&U_i\\
 U_i^*&0
\end{array}
\right)=e_{12}\otimes U_i+e_{21}\otimes U_i^*\in M_2(\mathcal B(\h_{i}\otimes \h'_i))=\mathcal B(\C^2\otimes \h_{i}\otimes \h'_i)$$and $$|\tilde{\psi}\rangle=\frac{1}{\sqrt{2}}\left(|\psi\rangle\otimes |1\cdots 1\rangle+|\eta \rangle\otimes |2\cdots 2\rangle\right)\in \h'_1\otimes\cdots\otimes \h'_k\otimes \C^{2^k},$$where we have denoted $|1\cdots 1\rangle=e_1\otimes \cdots \otimes e_1$ and $|2\cdots 2\rangle=e_2\otimes \cdots \otimes e_2$.

Then, by denoting $\tilde{\h}_i=\C^2\otimes \h'_i$ for every $i=1,\cdots, k$, we obtain
\begin{align*}
\beta^*(G)&\geq tr\left((\tilde{U}_1\otimes \cdots \otimes \tilde{U}_k)\left(G\otimes |\tilde{\psi}\rangle\langle\tilde{\psi}|\right)\right)\\&=\left\langle \tilde{\psi}\left|tr_{\B(\h_1\otimes \cdots \otimes \h_k)}\left(\left(\tilde{U}_1\otimes \cdots \otimes\tilde{U}_k\right)\left(G\otimes \uno_{\mathcal B(\tilde{\h}_1\otimes\cdots \otimes \tilde{\h}_k)}\right)\right)\right|\tilde{\psi}\right\rangle\\&=\frac{1}{2}\Big[\left\langle \psi\left|tr_{\B(\h_1\otimes \cdots \otimes \h_k)}\left(\left(U_1\otimes \cdots \otimes U_k\right)\left(G\otimes \uno_{\mathcal B(\h'_1\otimes\cdots \otimes \h'_k)}\right)\right)\right|\eta\right\rangle\\&+\left\langle \eta\left|tr_{\B(\h_1\otimes \cdots \otimes \h_k)}\left(\left(U_1^*\otimes \cdots \otimes U_k^*\right)\left(G\otimes \uno_{\mathcal B(\h'_1\otimes\cdots \otimes \h'_k)}\right)\right)\right|\psi\right\rangle\Big].
\end{align*}

Now, using that $\alpha\in \R$ and that $G^*=G$ one can check that the last expression equals $$\left\langle\psi\left|tr_{\B(\h_1\otimes \cdots \otimes \h_k)}\left(\left(U_1\otimes \cdots \otimes U_k\right)\left(G\otimes \uno_{\mathcal B(\h'_1\otimes\cdots \otimes \h'_k)}\right)\right)\right|\eta\right\rangle.$$

Since this happens for all Hilbert spaces $\h_i$, all unitaries $U_i\in \mathcal B(\h_{i}\otimes \h'_i)$,  $i=1,\cdots, k$ and all unit vectors $|\psi\rangle,\, |\eta \rangle\in \h'_1\otimes\cdots\otimes \h'_k$, we conclude that $$\beta^*(G)\geq \|G\|_{S_1(\h_{A_1})\otimes_{min}\cdots\otimes_{min}S_1(\h_{A_k})}.$$
\end{proof}

\subsection{LOCC strategies and separable strategies}

Since we want to quantify the resource of classical communication in the context of quantum XOR games, we must consider those strategies which can be implemented by the players when they are allowed to perform local operations and also the interchange of classical communication. This leads us to the following definition.
\begin{definition}
Given a POVM $(E_{a_1,\cdots, a_k})_{a_1,\cdots, a_k=\pm 1}$, we say that it is a \emph{LOCC measurement} if it can be implemented by means of local operations and the interchange of classical communication between the $k$ parties. We denote by LOCC the set of LOCC measurement\footnote{Note that we are implicitly fixing the number of possible outputs here $a_1,\cdots, a_k=\pm 1$. However, this is the only case we will consider in this paper. So, we prefer not to add more notation to make this point explicit.}.

We say that a given strategy $\mathcal P$ for a $k$-player quantum XOR game is a \emph{LOCC strategy} if it is of the form 
$$\mathcal P(\rho)=\left(tr\left(E_{a_1,\cdots, a_k}\rho\right)\right)_{a_1,\cdots ,a_k=\pm 1},$$where $(E_{a_1,\cdots, a_k})_{a_1,\cdots, a_k=\pm 1}$ is a LOCC measurement.
\end{definition}

Although one can write the form of a LOCC measurement explicitly \cite{CLMOW14}, the set LOCC is in general quite tricky to work with. We will introduce another type of measurements which will be very useful for us.

\begin{definition}\label{defi SEP}
Given a POVM $(E_{a_1,\cdots, a_k})_{a_1,\cdots, a_k=\pm 1}$, we say that it is a \emph{separable measurement} if it is of the form $$\left (\sum_{i\in I}P_i^{1,a_1}\otimes \cdots \otimes P_i^{k,a_k}\right)_{a_1,\cdots ,a_k=\pm 1},$$where $I$ is a finite set, $P_i^{j,a_j}\in \mathcal B(\h_{A_j})$ is a positive semidefinite operator for every $i$, $j$ and $a_{j}=\pm 1$ and such that $$\sum_{i\in I}\sum_{a_1,\cdots ,a_k=\pm 1}P_i^{1,a_1}\otimes \cdots \otimes P_i^{k,a_k}=\uno_{\mathcal B(\h_{A_1})}\otimes \cdots \otimes \uno_{\mathcal B(\h_{A_k})}.$$We denote by SEP the set of all separable measurements.

We say that a given strategy $\mathcal P$ for a $k$-player quantum XOR game is \emph{separable} if it is of the form 
$$\mathcal P(\rho)=\left(tr\left(E_{a_1,\cdots, a_k}\rho\right)\right)_{a_1,\cdots ,a_k=\pm 1},$$where $(E_{a_1,\cdots, a_k})_{a_1,\cdots, a_k=\pm 1}$ is a separable measurement.
\end{definition}

It is well known that $LOCC\subseteq SEP$ (see \cite[Section 2.3]{CLMOW14} for an explanation in the more general context of quantum instruments and the comments right below  \cite[Definition 12]{LPW18} for an explanation in the context of measurements for general GPTs). At the same time, the set SEP admits a simpler mathematical description than the set LOCC, and this can be used to derive some limitations on LOCC, as it has been shown in different contexts such as entanglement distillation \cite{HHH98, Rains01} and state discrimination \cite{Che04, DLT02}. We will use the set SEP in this work in the same way.

Give a $k$-player quantum XOR $G$, we denote by $\beta_{LOCC}(G)$ and $\beta_{SEP}(G)$ the maximum bias of $G$ attainable by the players when they are restricted to the use of LOCC strategies and separable strategies respectively. The following corollary will be key later.
\begin{corollary}\label{upper bound LOCC bias}
Given a $k$-player quantum XOR game defined by the element $G$, we have that 
$$\beta_{LOCC}(G)\leq \beta_{SEP}(G)\leq \sup\left\{tr\left( \sum_{i\in I}A_i^1\otimes \cdots \otimes A_i^kG\right)\right\},$$
where the sup is taken over all finite sets $I$ and all self-adjoint operators $A_i^j\in \mathcal B(\h_{A_j})$ for every $i$ and $j$ such that $$\sum_{i\in I}\|A_i^1\|_{S_1(\h_{A_1})}\cdots \|A_i^k\|_{S_1(\h_{A_k})}\leq n_1\cdots n_k.$$Here, $n_i=\text{dim}(\h_{A_i})$, $i=1,\cdots, k$.
\end{corollary}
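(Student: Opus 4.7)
The first inequality $\beta_{LOCC}(G)\leq \beta_{SEP}(G)$ is immediate from the inclusion $\mathrm{LOCC}\subseteq \mathrm{SEP}$ already recalled in the text. The content is the second inequality, and my plan is to unfold the definition of a separable strategy, identify the natural self-adjoint "bias operators," and then read off the required $S_1$-norm bound directly from the POVM normalization.

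First I will fix a separable strategy $\mathcal P$ associated with a measurement $(E_{a_1,\dots,a_k})$ of the form given in Definition \ref{defi SEP}, and plug it into the bias formula (\ref{bias}). After exchanging the two finite sums, the bias becomes
\begin{align*}
\beta(G;\mathcal P)=\sum_{x}p_xc_x\,\tr\!\left(\sum_{i\in I}\Big(\sum_{a_1,\dots,a_k=\pm 1} a_1\cdots a_k\, P_i^{1,a_1}\otimes\cdots\otimes P_i^{k,a_k}\Big)\rho_x\right).
\end{align*}
The inner bracket factorizes as $A_i^1\otimes\cdots\otimes A_i^k$, where
\begin{align*}
A_i^j\;:=\;P_i^{j,+1}-P_i^{j,-1}\;\in\;\mathcal B(\h_{A_j})
\end{align*}
is self-adjoint. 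Using the definition (\ref{XOR quantum-operator}) of $G$, this shows $\beta(G;\mathcal P)=\tr\!\left(\sum_{i\in I}A_i^1\otimes\cdots\otimes A_i^k\,G\right)$, so the desired inequality reduces to checking the $S_1$-budget constraint $\sum_{i\in I}\|A_i^1\|_{S_1}\cdots\|A_i^k\|_{S_1}\leq n_1\cdots n_k$.

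To verify this budget I will use positivity of the $P_i^{j,a_j}$. Since $A_i^j$ is a difference of two positive operators, the triangle inequality in $S_1$ plus $\|P\|_{S_1}=\tr(P)$ for $P\geq 0$ gives
\begin{align*}
\|A_i^j\|_{S_1(\h_{A_j})}\;\leq\;\tr(P_i^{j,+1})+\tr(P_i^{j,-1})\;=\;\tr(Q_i^j),
\end{align*}
where $Q_i^j:=P_i^{j,+1}+P_i^{j,-1}\geq 0$. Multiplying across $j$ and summing in $i$,
\begin{align*}
\sum_{i\in I}\|A_i^1\|_{S_1}\cdots \|A_i^k\|_{S_1}\;\leq\;\sum_{i\in I}\tr(Q_i^1)\cdots \tr(Q_i^k)\;=\;\tr\!\left(\sum_{i\in I}Q_i^1\otimes\cdots\otimes Q_i^k\right).
\end{align*}
Finally, expanding $Q_i^j$ recovers $\sum_{i,a_1,\dots,a_k}P_i^{1,a_1}\otimes\cdots\otimes P_i^{k,a_k}$, which by the completeness relation in Definition \ref{defi SEP} equals $\uno_{\mathcal B(\h_{A_1})}\otimes\cdots\otimes\uno_{\mathcal B(\h_{A_k})}$, whose trace is exactly $n_1\cdots n_k$. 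Taking the supremum over separable strategies concludes the proof.

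There is no real obstacle here; the whole argument is a direct bookkeeping exercise, the only subtle point being the use of $\|P\|_{S_1}=\tr(P)$ for positive operators, which is precisely what lets the POVM normalization convert cleanly into the $S_1$ budget $n_1\cdots n_k$.
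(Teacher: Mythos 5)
Your proposal is correct and follows essentially the same route as the paper: both arguments set $A_i^j=\sum_{a_j}a_j P_i^{j,a_j}$, bound $\|A_i^j\|_{S_1}$ by the triangle inequality together with $\|P\|_{S_1}=\tr(P)$ for $P\geq 0$, and then invoke the completeness relation of the separable POVM to get the budget $n_1\cdots n_k$. No gaps; the intermediate operators $Q_i^j$ are just a cosmetic repackaging of the paper's computation.
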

\begin{proof}
The first inequality is a direct consequence of the already mentioned inclusion $LOCC\subseteq SEP$. For the second inequality, according to Eq. (\ref{bias})  and Definition \ref{defi SEP}, the bias of $G$ corresponding to a given separable strategy $\mathcal P$ is of the form
\begin{align*}
\beta (G; \mathcal P)&=\sum_{a_1,\cdots ,a_k=\pm 1}a_1\cdots a_k \, tr\left (\sum_{i\in I}P_i^{1,a_1}\otimes \cdots \otimes P_i^{k,a_k}G\right)\\&=tr\left (\sum_{i\in I}A_i^1\otimes \cdots \otimes A_i^kG\right),
\end{align*}where we have denoted $A_i^j=\sum_{a_j=\pm 1}a_jP_i^{j,a_j}\in \mathcal B(\h_{A_j})^{\mathrm{sa}}$ for every $i$ and $j$.

On the other hand, note that 
\begin{align*}
\sum_{i\in I}\|A_i^1\|_{S_1(\h_{A_1})}\cdots \|A_i^k\|_{S_1(\h_{A_k})}&= \sum_{i\in I}\left\|\sum_{a_1=\pm 1}a_1P_i^{1,a_1}\right\|_{S_1(\h_{A_1})}\cdots \left\|\sum_{a_k=\pm 1}a_kP_i^{k,a_k}\right\|_{S_1(\h_{A_k})}\\&\leq  \sum_{i\in I}\left(\sum_{a_1=\pm 1}\left\|P_i^{1,a_1}\right\|_{S_1(\h_{A_1})}\right)\cdots \left(\sum_{a_k=\pm 1}\left\|P_i^{k,a_k}\right\|_{S_1(\h_{A_k})}\right)\\&=\sum_{i\in I}\left(\sum_{a_1=\pm 1}tr(P_i^{1,a_1})\right)\cdots \left(\sum_{a_k=\pm 1}\tr(P_i^{k,a_k})\right)\\&=tr\left(\sum_{i\in I}\sum_{a_1,\cdots ,a_k=\pm 1}P_i^{1,a_1}\otimes \cdots \otimes P_i^{k,a_k}\right)\\&=tr\left(\uno_{\mathcal B(\h_{A_1})}\otimes \cdots \otimes \uno_{\mathcal B(\h_{A_k})}\right)\\&=n_1\cdots n_k.
\end{align*}
\end{proof}

\section{Proof of the main result}\label{Sec: Proofs}

From now on, we will fix $\h_1=\cdots=\h_k=\C^D$, so that $\mathcal B(\h_i)=S_\infty^D$ and $S_1(\h_i)=S_1^D$ for every $i$.

The key result of the paper is as follows: 
\begin{theorem}\label{Key Thm}
For every $k\geq 3$ there exists a constant $C(k)$ depending only on $k$ with the following property: Given natural numbers $D$ and $m$ satisfying that $D\geq c m^4 \sqrt{\log m}$ (where $c$ is a universal constant) we have  
\begin{align*}
\Big\|\frac{Id}{D^k}:\overbrace{S_\infty^{D}\otimes_\epsilon \cdots \otimes_\epsilon S_\infty^{D}}^k\rightarrow \overbrace{S_1^{D}\otimes_{min}\cdots \otimes_{min}S_1^{D}}^k\Big\|\geq C(k)m^{\frac{k}{2}-1}(\log m)^{-\frac{k}{2}}. 
\end{align*} 
\end{theorem}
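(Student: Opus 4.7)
The argument is probabilistic: I would construct a single tensor $z\in M_D^{\otimes k}$ via a Gaussian random construction, and show that it simultaneously admits an upper bound on $\|z\|_{S_\infty^D\otimes_\epsilon\cdots\otimes_\epsilon S_\infty^D}$ and a lower bound on $\|z\|_{S_1^D\otimes_{min}\cdots\otimes_{min}S_1^D}$ whose ratio, after the $1/D^k$ normalization, matches the claimed $C(k)\,m^{k/2-1}(\log m)^{-k/2}$. This is the operator-space analogue of the random multilinear constructions used in \cite{PWPVJ08} to separate the classical and entangled biases of multiplayer classical XOR games, with the extra twist that the ``test elements'' must now live inside $M_D$ so as to produce honest \emph{quantum} games.

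For each party $j\in[k]$, fix a family $\{V_i^{(j)}\}_{i=1}^m\subset M_D$ of embedding matrices --- natural candidates are rescaled columns of a random partial isometry $\mathbb{C}^m\hookrightarrow\mathbb{C}^D$, or rescaled random rank-one operators --- and form the random Gaussian tensor
$$z=\sum_{i_1,\ldots,i_k=1}^{m}g_{i_1,\ldots,i_k}\,V_{i_1}^{(1)}\otimes\cdots\otimes V_{i_k}^{(k)},$$
with $(g_{\mathbf{i}})$ i.i.d.\ standard real Gaussians.

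\emph{Upper bound on the injective norm.} The multilinear functional $z:(S_1^D)^k\to\mathbb{R}$ pairs each argument $A_j\in S_1^D$ only against the span of $\{V_i^{(j)}\}_i$, so one gets a factorization
$$\|z\|_{S_\infty^D\otimes_\epsilon\cdots\otimes_\epsilon S_\infty^D}\;\leq\;\Big(\prod_{j=1}^k\|T^{(j)}\|\Big)\,\|(g_{\mathbf{i}})\|_{\ell_\infty^m\otimes_\epsilon\cdots\otimes_\epsilon\ell_\infty^m},$$
where $T^{(j)}:\ell_\infty^m\to S_\infty^D$ sends $e_i\mapsto V_i^{(j)}$. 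For good choices of $V_i^{(j)}$ the embedding norms $\|T^{(j)}\|$ are $O(1)$, and iterating Chevet's inequality (Theorem~\ref{Chevet}) controls the remaining Gaussian $\epsilon$-norm by a small polynomial factor (roughly $m^{(k-1)/2}$ up to a polylog in $m$) in expectation.

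\emph{Lower bound on the min norm.} Invoke Lemma~\ref{min on S_1} with an explicit entangled witness. Take ancillary Hilbert spaces $\mathcal{H}'_j\cong\mathbb{C}^m$ and the shared GHZ-type vector $|\psi\rangle=m^{-1/2}\sum_{i=1}^m|i\rangle^{\otimes k}\in(\mathbb{C}^m)^{\otimes k}$; design unitaries $U_j$ on $\mathbb{C}^D\otimes\mathcal{H}'_j$ acting as ``controlled $V_i^{(j)}$'', i.e.\ conditioned on the ancilla basis index $i$ they implement a unitary dilation of $V_i^{(j)}$. Plugging these into the formula in Lemma~\ref{min on S_1} makes only the diagonal-index Gaussians $g_{i,\ldots,i}$ survive the partial trace, and they add coherently via the GHZ structure; tracking the $D$- and $m$-powers carefully produces a lower bound of the correct polynomial order in $D$ and $m$.

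\emph{Combining.} A standard Markov/Gaussian-concentration argument selects a deterministic realization of $z$ for which both estimates hold simultaneously; dividing and absorbing the factor $1/D^k$ from the definition of $\|\mathrm{Id}/D^k\|$ yields the asserted lower bound. The dimensional hypothesis $D\geq c\,m^4\sqrt{\log m}$ enters to guarantee that the random embeddings $V_i^{(j)}$ are nearly isometric in the required senses, and that the unitary dilations used in the witness comfortably fit inside $M_D$.

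\emph{Main obstacle.} The hard step is the min-norm lower bound: one has to engineer $\{V_i^{(j)}\}$ and the unitaries $U_j$ so that the entangled strategy coherently extracts the diagonal mass of the Gaussian tensor --- producing the gain of essentially a full factor $m^{(k-2)/2}$ over the classical bound from Chevet --- while simultaneously keeping $\|T^{(j)}\|$ bounded so that Step~1 goes through. Coordinating these two requirements, and tracking the precise normalizations to land on $m^{k/2-1}(\log m)^{-k/2}$, is where the operator-space subtleties concentrate.
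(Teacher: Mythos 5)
There is a genuine gap, and it sits exactly where you locate the difficulty: the lower bound on the min norm. Your witness is a GHZ vector $|\psi\rangle=m^{-1/2}\sum_i|i\rangle^{\otimes k}$ together with controlled unitaries, so that after the partial trace only the diagonal coefficients $g_{i,\ldots,i}$ survive. But this extracts only $m$ of the $m^k$ Gaussian coefficients: with ``dual'' choices $tr\bigl(W_l^{(j)}V_i^{(j)}\bigr)=c\,\delta_{li}$ the witness value is $\tfrac{c^k}{m}\sum_l g_{l,\ldots,l}$, which even after optimizing phases is of order $\tfrac{c^k}{m}\sum_l|g_{l,\ldots,l}|\approx c^k$, while keeping $\|T^{(j)}\|=O(1)$ forces $c=O(1)$ (for rank-one or near-isometric $V_i^{(j)}$ one has $|tr(W_l^{(j)}V_l^{(j)})|\lesssim 1$). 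So the diagonal mechanism yields an $O(1)$ lower bound on the min norm against an injective-norm upper bound that is at least of order $1$, and no factor $m^{k/2-1}$ appears. The claimed gain of ``$m^{(k-2)/2}$ over Chevet'' is asserted but not produced by this witness, and I do not see how any diagonal/GHZ extraction from a \emph{single} Gaussian family $(g_{\mathbf i})$ with vector-valued local indices could produce it.

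The paper's construction is structurally different in the two places that matter. First, the seed tensor is not $\sum_{\mathbf i}g_{\mathbf i}e_{i_1}\otimes\cdots\otimes e_{i_k}$ but the Bri\"et--Vidick--Pisier element $\tau=\sum g_{\mathbf i}g'_{\mathbf i'}e_{i_1i_1'}\otimes\cdots\otimes e_{i_ki_k'}$ of Theorem \ref{Thm. Briet-Vidick}: two \emph{independent} Gaussian families, with each local factor a matrix unit in $M_m\cong\ell_2^{m^2}$. Regrouped, $\tau$ is essentially the rank-one operator $|g\rangle\langle g'|$ on $\ell_2^{m^k}$, so $\|\tau\|_{S_\infty^m\otimes_{min}\cdots\otimes_{min}S_\infty^m}=\|\tau\|_{S_\infty^{m^k}}\gtrsim m^k$ --- the entangled witness is the pair of Gaussian vectors themselves, which extracts the \emph{full} $\ell_2$ mass coherently --- while $\|\tau\|_{\ell_2^{m^2}\otimes_\epsilon\cdots\otimes_\epsilon\ell_2^{m^2}}\lesssim m(\log m)^{k/2}$. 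This rank-one outer-product structure is the sole source of the $m^{k/2-1}$ separation and is absent from your construction. Second, to transport the min-norm lower bound from $S_\infty^m$ up to $S_1^D$ the paper does not build unitaries inside $M_D$ directly; it constructs a completely bounded approximate left inverse $\eta:S_1^D\to S_\infty^m$ of the random embedding $u$, with $\|\eta\|_{cb}\lesssim\sqrt m$, via the concentration estimate of Proposition \ref{prop_gaussians} and the Auerbach-perturbation Lemma \ref{Auerbach_operator-spaces}; this is precisely where the hypothesis $D\geq cm^4\sqrt{\log m}$ is used, rather than to make the $V_i^{(j)}$ nearly isometric. To repair your argument you would need to replace the single Gaussian family and GHZ witness by the two-family matrix-unit element (or an equivalent device achieving an operator-norm, rather than diagonal, coherent gain) and supply the cb-bounded inversion step.
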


\begin{remark}\label{explanation NCGT}
In contrast to the previous theorem, one can show that there exists a universal constant $C$ such that 
\begin{align}\label{bilinear case}
\Big\|\frac{Id}{D^2}:S_\infty^{D}\otimes_\epsilon S_\infty^{D}\rightarrow S_1^{D}\otimes_{min}S_1^{D}\Big\|\leq C. 
\end{align}This explains why Theorem \ref{Main Thm} cannot be extended automatically to the case of bipartite quantum XOR games. 

In order to show Eq. (\ref{bilinear case}), let us first note that, according to  Eq. (\ref{eq:lin-epsilon-norm}) and Eq. (\ref{eq:lin-min-norm}), Eq. (\ref{bilinear case}) is equivalent to state that for every linear map $T:S_1^D\rightarrow S_\infty^D$ we have $$\|T:S_\infty^D\rightarrow S_1^D\|_{cb}\leq CD^2\|T:S_1^D\rightarrow S_\infty^D\|.$$ 

In order to see this inequality, let us invoke \cite[Remark 3.1]{JKPV22} for the particular case $H=\ell_2^D$ and $a=b=Id_{H}$, to deduce that there exist linear maps $u:S_\infty^{D}\rightarrow R\cap C$ and $v:R\cap C\rightarrow S_1^{D}$ such that the identity map $Id:S_\infty^{D}\rightarrow S_1^{D}$ can be written as $Id=v\circ u$, with$$\|u:S_\infty^{D}\rightarrow R\cap C\|_{cb}\|v:\ell_2\rightarrow S_1^{D}\|\leq \sqrt{2}\|a\|_{S_2(H)}\|b\|_{S_2(H)}=\sqrt{2}D.$$Here $R\cap C$ denotes a particular operator spaces structure on the Hilbert space $\ell_2$. Then, following the main ideas in \cite{JKPV22}, we use a noncommutative Grothendieck's theorem (see Corollary 2.2 in its dual form) to state that, given any linear map $T:S_1^D\rightarrow S_\infty^D$, we have 
\begin{align*}
\|T:S_\infty^D\rightarrow S_1^D\|_{cb}&\leq \|u:S_\infty^D\rightarrow R\cap C\|_{cb}\| \|T\circ v:R\cap C\rightarrow S_1^D\|_{cb}\\&\leq 2\|u:S_\infty^D\rightarrow R\cap C\|_{cb}\| \|T\circ v:\ell_2\rightarrow S_1^D\|\\&\leq 2\|u:S_\infty^D\rightarrow R\cap C\|_{cb}\| \|v:\ell_2\rightarrow S_1^D\|\|T:S_1^D\rightarrow S_1^D\|\\&\leq 2\sqrt{2}D\|T:S_1^D\rightarrow S_1^D\|\\&\leq 2\sqrt{2}D^2\|T:S_1^D\rightarrow S_\infty^D\|.
\end{align*}

Hence, we see that Eq. (\ref{bilinear case}) is a direct consequence of a noncommutative Grothendieck's theorem for bilinear forms on C$^*$-algebras and Theorem \ref{Key Thm} can be understood as a counterexample of that theorem for $k$-linear forms if $k\geq 3$.
\end{remark}

Before proving Theorem \ref{Key Thm} let us explain why it implies Theorem \ref{Main Thm}.

\begin{proof}[Proof of Theorem \ref{Main Thm}]
According to Theorem \ref{Key Thm} there exists an element $z\in S_\infty^{D}\otimes \cdots\otimes S_\infty^{D}$ such that $$\|z\|_{S_\infty^{D}\otimes_\epsilon \cdots\otimes_\epsilon S_\infty^{D}}\leq 1\hspace{0.3 cm} \text{and} \hspace{0.3 cm} \left\|z\right\|_{S_1^{D}\otimes_{min}\cdots\otimes_{min}S_1^{D}}\geq C(k)D^km^{\frac{k}{2}-1}(\log m)^{-\frac{k}{2}},$$ where $m$ and $D$ satisfy the relation $D\geq c m^4 \sqrt{\log m}$ for a universal constant $c$.

Let us first prove that we can assume the element $z$ to be self-adjoint ($z\in M_{D^k}^{\mathrm{sa}}$) at the prize to replace $D$ by $2D$, which will be reflected in a slight modification of the constant $C(k)$. To this end consider the linear isometry $j:S_\infty^D\rightarrow S_\infty^{2D}$ given by $$j(x)=\left(
\begin{array}{cc}
0&x\\
 x^*&0
\end{array}
\right)$$and define the new element $\tilde{z}=(j\otimes \cdots\otimes j)(z)\in S_\infty^{2D,\mathrm{sa}}\otimes \cdots \otimes S_\infty^{2D, \mathrm{sa}}=S_\infty^{(2D)^k, \mathrm{sa}}$. 

Now, the metric mapping property of the $\epsilon$ norm guarantees that $$\|\tilde{z}\|_{S_\infty^{2D}\otimes_\epsilon \cdots\otimes_\epsilon S_\infty^{2D}}= \|z\|_{S_\infty^{D}\otimes_\epsilon \cdots\otimes_\epsilon S_\infty^{D}},$$while one can check that
\begin{align}\label{comparison norms}
\|\tilde{z}\|_{S_1^{2D}\otimes_{min} \cdots\otimes_{min}  S_1^{2D}}\geq \|z\|_{S_1^{D}\otimes_{min}\cdots\otimes_{min}S_1^{D}}.
\end{align}

For this last inequality, recall from Lemma \ref{min on S_1} that $$\|z\|_{S_1^{D}\otimes_{min}\cdots\otimes_{min}S_1^{D}}=\sup\left\{\left\|tr_{M_{D^k}}\left(\left(U_1\otimes \cdots\otimes U_k\right)\left(z^T\otimes \uno_{M_{d^k}}\right)\right)\right\|_{S_\infty^{d^k}}:\, d\in \N,\,  \sup_{i=1,\cdots , k}\|U_i\|_{S_\infty^{Dd}}\leq 1\right\}.$$

Now, if we consider the linear map $h:S_\infty^D\rightarrow S_\infty^{2D}$ given by $$h(x)=\left(
\begin{array}{cc}
0&x\\
0&0
\end{array}
\right),$$for any choice of matrices $U_i$ as above, we clearly have that $\tilde{U}_i=(h\otimes Id_{M_d})(U_i)$ satisfies  $\|\tilde{U}_i\|_{_{S_\infty^{(2D)d}}}\leq 1$ for every $i=1,\cdots , k$ and 
\begin{align*}
\|\tilde{z}\|_{S_1^{2D}\otimes_{min}\cdots\otimes_{min}S_1^{2D}}&\geq \left\|tr_{M_{(2D)^k}}\left(\left(\tilde{U}_1\otimes \cdots\otimes \tilde{U}_k\right)\left(\tilde{z}^T\otimes \uno_{M_{d^k}}\right)\right)\right\|_{S_\infty^{d^k}}\\&=\left\|tr_{M_{D^k}}\left(\left(U_1\otimes \cdots\otimes U_k\right)\left(z^T\otimes \uno_{M_{d^k}}\right)\right)\right\|_{S_\infty^{d^k}}.
\end{align*} Hence, we obtain inequality (\ref{comparison norms}).

Once we have shown that $z$ can be assumed to be self-adjoint, we know that $$G=\frac{z}{\|z\|_{S_1^{D^k}}}$$ defines a $k$-player quantum XOR game. Moreover, according to Lemma \ref{entangled bias} we know that $$\beta^*(G)= \frac{1}{\|z\|_{S_1^{D^k}}}\|z\|_{S_1^{D}\otimes_{min}\cdots\otimes_{min}S_1^{D}}\geq \frac{1}{\|z\|_{S_1^{D^k,\mathrm{sa}}}}C(k)D^km^{\frac{k}{2}-1}(\log m)^{-\frac{k}{2}}.$$

Let us finally show that $\beta_{SEP}(G)\leq \frac{D^k}{\|z\|_{S_1^{D^k}}}$, from where the statement of the theorem follows. According to Corollary \ref{upper bound LOCC bias}, we have 
\begin{align*}
\beta_{SEP}(G)\leq \sup\left\{tr\left( \sum_{i\in I}A_i^1\otimes \cdots \otimes A_i^kG\right)\right\},
\end{align*}where the sup is taken over all finite sets $I$ and all self-adjoint operators $A_i^j\in S_\infty^D$ for every $i$ and $j$ such that $$\sum_{i\in I}\|A_i^1\|_{S_1^{D}}\cdots \|A_i^k\|_{S_1^{D}}\leq D^k.$$

In our case, given a finite set $I$ and a family of self-adjoint operators $A_i^j\in S_\infty^D$ satisfying the previous requirements, we have
\begin{align}\label{last upper bound}
tr\left( \sum_{i\in I}A_i^1\otimes \cdots\otimes A_i^k \, G\right)&= \frac{1}{\|z\|_{S_1^{D^k}}}tr\left( \sum_{i\in I}A_i^1\otimes \cdots\otimes A_i^k \, z\right)\\\nonumber&\leq \frac{1}{\|z\|_{S_1^{D^k}}}\left\| \sum_{i\in I}A_i^1\otimes\cdots \otimes A_i^k\right\|_{S_1^D\otimes_{\pi}\cdots\otimes_{\pi}S_1^D}\|z\|_{S_\infty^{D}\otimes_\epsilon \cdots\otimes_\epsilon S_\infty^{D}}\\\nonumber&\leq \frac{1}{\|z\|_{S_1^{D^k}}}\sum_{i\in I}\|A_i^1\|_{S_1^{D}}\cdots \|A_i^k\|_{S_1^{D}}\\\nonumber&\leq \frac{D^k}{\|z\|_{S_1^{D^k}}},
\end{align}where the first equality follows from the definition of $G$, in the second inequality we have used eq. (\ref{duality pi_epsilon}) and the third inequality follows from the fact that $\|z\|_{S_\infty^{D}\otimes_\epsilon \cdots\otimes_\epsilon S_\infty^{D}}\leq 1$ and the triangle inequality.

Since Eq. (\ref{last upper bound}) works for any family of self-adjoint operators $A_i^j\in S_\infty^D$ satisfying the requirements in Corollary \ref{upper bound LOCC bias}, we conclude the proof.
\end{proof}

In order to prove Theorem \ref{Key Thm}, the following theorem, proved in \cite{BrVi13} and slightly improved in \cite{Pisier_Bell}, will be crucial for us (see \cite[Section 2]{Pisier_Bell} for details).
\begin{theorem}\label{Thm. Briet-Vidick}
Let $k$ be a natural number and $\{g_{i_1,\cdots, i_k}\}_{i_1,\cdots, i_k=1}^n$ be a family of independent normal real Gaussian variables. Let $\{g'_{i_1,\cdots, i_k}\}_{i_1,\cdots, i_k=1}^n$ be an independent copy of the family $\{g_{i_1,\cdots, i_k}\}_{i_1,\cdots, i_k=1}^n$. Then, if we denote $$\tau=\sum_{\substack{i_1,\cdots, i_k=1\\\substack{i'_1,\cdots, i'_k=1}}}^ng_{i_1,\cdots, i_k}g'_{i'_1,\cdots, i'_k}e_{i_1i_1'}\otimes\cdots\otimes e_{i_ki_k'},$$we have 
\begin{align*}
\mathbb E\|\tau\|_{\ell_2^{n^2}\otimes_\epsilon\cdots \otimes_\epsilon \ell_2^{n^2}}\leq C(k)n(\log n)^{\frac{k}{2}},
\end{align*}where $C(k)$ is a constant depending only on $k$. 
\end{theorem}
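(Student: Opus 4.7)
My plan is a probabilistic construction: I will exhibit, with positive probability, an element $z\in (M_D)^{\otimes k}$ whose ratio $\|z\|_{S_1^D\otimes_{min}\cdots\otimes_{min}S_1^D}/\|z\|_{S_\infty^D\otimes_\epsilon\cdots\otimes_\epsilon S_\infty^D}$ exceeds $C(k)\,D^k\,m^{k/2-1}(\log m)^{-k/2}$. The natural starting point is the Gaussian tensor of Theorem \ref{Thm. Briet-Vidick} applied with $n=m$: fix independent real Gaussian families $\{g_{\vec i}\}_{\vec i=1}^m$ and $\{g'_{\vec i'}\}_{\vec i'=1}^m$ and set
\begin{align*}
z \,=\, \sum_{\vec i,\vec i'=1}^{m} g_{\vec i}\, g'_{\vec i'}\, X^1_{i_1 i'_1}\otimes\cdots\otimes X^k_{i_k i'_k}\,\in\,(M_D)^{\otimes k},
\end{align*}
where each family $(X^j_{i,i'})_{i,i'=1}^m\subset M_D$ is chosen (typically probabilistically; e.g.\ via Haar-random isometries $V^j:\C^m\to\C^D$ with $X^j_{i,i'}=V^j|i\rangle\langle i'|(V^j)^*$, or a more elaborate random matrix family) so that the rank-one structure underlying Theorem \ref{Thm. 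Briet-Vidick} is faithfully embedded into $(M_D)^{\otimes k}$ while exploiting the extra $D$-dimensional freedom.

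First I would bound the $\epsilon$-norm using the metric mapping property. Let $T^j:\ell_2^{m^2}\to S_\infty^D$ denote the linear map sending the canonical basis to the $X^j_{i,i'}$; then $\|z\|_{S_\infty^D\otimes_\epsilon\cdots\otimes_\epsilon S_\infty^D}\leq \prod_j\|T^j\|\cdot\|\tau\|_{\ell_2^{m^2}\otimes_\epsilon\cdots\otimes_\epsilon\ell_2^{m^2}}$, where $\tau$ is the tensor of Theorem \ref{Thm. Briet-Vidick}. The second factor is at most $C(k)m(\log m)^{k/2}$ directly by Theorem \ref{Thm. Briet-Vidick}, while the norms $\|T^j\|$ are handled by Chevet's inequality (Theorem \ref{Chevet}) combined with standard random-matrix concentration estimates on $\mathbb{E}\|\sum_i h_i X^j_i\|_{S_\infty^D}$ (for a scalar Gaussian sequence $(h_i)_i$). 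The hypothesis $D\geq cm^4\sqrt{\log m}$ enters precisely here: it is the regime in which these concentration bounds succeed with $\|T^j\|=O(1)$ and a good margin. Second, for the $\min$-norm lower bound I would use Lemma \ref{min on S_1} with auxiliary spaces $\h'_j$ of appropriate dimension and unitaries $U_j\in\mathcal B(\C^D\otimes\h'_j)$ engineered as ``coherent swaps'' coupled to the $X^j$'s, so that the partial trace $\mathrm{tr}_{M_D^{\otimes k}}((U_1\otimes\cdots\otimes U_k)(z\otimes\mathbf 1))$ returns a rank-one operator on $\h'_1\otimes\cdots\otimes\h'_k$ of the form $\sim|G\rangle\langle G'|$ for suitably dilated coherent vectors $|G\rangle$ and $|G'\rangle$; testing against the unit vectors $|G\rangle/\|G\|$ and $|G'\rangle/\|G'\|$ then gives a lower bound of the required order $D^k m^{k/2}$.

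The main obstacle is this joint calibration: the $X^j$'s must behave like an approximate isometric embedding $\ell_2^{m^2}\hookrightarrow S_\infty^D$ so that the $\epsilon$-norm side matches Theorem \ref{Thm. Briet-Vidick}, while simultaneously leaving enough operator-space flexibility for the unitaries $U_j$ on the $\min$-side to extract a fully $D$-dimensional coherent amplification from the product-Gaussian structure of $z$. I expect the polynomial dimension gap $D\geq cm^4\sqrt{\log m}$ to arise precisely as the threshold at which these two random-matrix requirements can be reconciled with good concentration margins; a union bound over the two favourable events then yields a deterministic realisation of $z$ achieving the claimed norm ratio, which via the reduction used in the proof of Theorem \ref{Main Thm} proves Theorem \ref{Key Thm}.
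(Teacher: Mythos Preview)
Your proposal does not address the stated theorem. Theorem~\ref{Thm. Briet-Vidick} is the inequality $\mathbb E\|\tau\|_{\ell_2^{n^2}\otimes_\epsilon\cdots\otimes_\epsilon\ell_2^{n^2}}\leq C(k)\,n(\log n)^{k/2}$ for the rank-one Gaussian tensor $\tau$; your write-up is instead a sketch of Theorem~\ref{Key Thm}, and in fact invokes Theorem~\ref{Thm. Briet-Vidick} as a black box. The paper itself does not prove Theorem~\ref{Thm. Briet-Vidick} either: it is quoted from \cite{BrVi13} and \cite{Pisier_Bell} (see the sentence introducing the statement), so there is no ``paper's own proof'' to compare against.

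If your intention was actually Theorem~\ref{Key Thm}, the overall shape (push $\tau$ forward through maps $T^j:\ell_2^{m^2}\to S_\infty^D$ and bound the $\epsilon$-norm via the metric mapping property plus Theorem~\ref{Thm. Briet-Vidick}) matches the paper. But your $\min$-norm lower bound is where the plan breaks. With the Haar-isometry choice $X^j_{i,i'}=V^j|i\rangle\langle i'|(V^j)^*$ the element $z$ sits entirely inside an $m\times m$ corner of each $M_D$, so $\|z\|_{S_1^D\otimes_{min}\cdots\otimes_{min}S_1^D}=\|\tau\|_{S_1^m\otimes_{min}\cdots\otimes_{min}S_1^m}$, which carries no $D$-dependence whatsoever; there is no ``$D$-dimensional coherent amplification'' to extract, and the claimed order $D^k m^{k/2}$ is unreachable. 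The paper resolves this differently: it uses \emph{Gaussian} matrices $f_{tv}=D^{-3/2}\sum_{r,s}g_{rs}^{tv}e_{rs}$, so that $\|u:\ell_2^{m^2}\to S_\infty^D\|\lesssim 1/D$ (not $O(1)$), which is where the factor $D^{-k}$ enters on the $\epsilon$-side. For the $\min$-side it does \emph{not} build explicit unitaries via Lemma~\ref{min on S_1}; instead it constructs, using Proposition~\ref{prop_gaussians} and Lemma~\ref{Auerbach_operator-spaces}, a completely bounded left inverse $\eta:S_1^D\to S_\infty^m$ with $\|\eta\|_{cb}\lesssim\sqrt m$ satisfying $(\eta\otimes\cdots\otimes\eta)(z)=\tau$, and then uses $\|\tau\|_{S_\infty^{m^k}}\gtrsim m^k$. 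The hypothesis $D\geq c\,m^4\sqrt{\log m}$ is exactly what makes $\Phi(f_{tv})$ close enough to $e_{tv}$ for Lemma~\ref{Auerbach_operator-spaces} to apply; it is not a concentration threshold for your $\|T^j\|=O(1)$ claim.
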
 

Note that, given the random element $\tau$ from the previous theorem, one has in addition that
\begin{align*}
\mathbb E\|\tau\|_{S_\infty^{n^k}}&=\mathbb E\left\|\sum_{\substack{i_1,\cdots, i_k=1\\\substack{i'_1,\cdots, i'_k=1}}}^ng_{i_1,\cdots, i_k}g'_{i'_1,\cdots, i'_k}(e_{i_1}\otimes\cdots\otimes e_{i_k})\otimes (e_{i_1'}\otimes\cdots\otimes e_{i_k'})\right\|_{S_\infty^{n^k}}\\&=\left(\mathbb E\left\|\sum_{i_1,\cdots, i_k=1}^ng_{i_1,\cdots, i_k}e_{i_1}\otimes \cdots \otimes e_{i_k}\right\|_{\ell_2^{n^k}}\right)^2\geq C'n^k,
\end{align*}where $C'$ is a universal constant.

Hence, we can easily deduce the existence of an element (that we will also denote $\tau$) such that 
\begin{align}\label{element Briet-Vidick}
\|\tau\|_{\ell_2^{n^2}\otimes_\epsilon\cdots \otimes_\epsilon \ell_2^{n^2}}\leq C(k)n(\log n)^{\frac{k}{2}} \hspace{0.3 cm}\text{and}  \hspace{0.3 cm} \|\tau\|_{S_\infty^{n^k}}\geq C'n^k.
\end{align}

In fact, one can show that Eq. (\ref{element Briet-Vidick}) happens with high probability.

We will also need the following lemma about Gaussian variables.
\begin{prop}\label{prop_gaussians}
Let $(g_j^l)_{j,l=1}^{m,N}$ be a family of independent normal real Gaussian variables. Then, for every $j,j'=1,\cdots, m$ and every $t\geq 0$ we have 
\begin{align*}
\mathbb P\left(\left|\frac{1}{N}\sum_{l=1}^Ng_j^lg_{j'}^l-\delta_{jj'}\right|\geq t\right) \leq  2e^{-cN\min\{\frac{t^2}{C^2}, \frac{t}{C}\}}, 
\end{align*}where $c$ and $C$ are universal constants.

Hence,
\begin{align*}
\mathbb P\left(\max_{j,j'=1,\cdots, m}\left|\frac{1}{N}\sum_{l=1}^Ng_j^lg_{j'}^l-\delta_{jj'}\right|\geq t\right) \leq  2e^{2\log m-cN\min\{\frac{t^2}{C^2}, \frac{t}{C}\}}.
\end{align*}

\end{prop}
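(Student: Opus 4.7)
The statement is a standard Bernstein-type deviation bound for a sum of independent, mean-zero subexponential random variables, together with a union bound. The plan is to treat the diagonal case $j=j'$ and the off-diagonal case $j\neq j'$ uniformly by showing that in each case the summands $Z_l:=g_j^l g_{j'}^l-\delta_{jj'}$, $l=1,\dots,N$, are i.i.d., centered, and subexponential with a universal subexponential norm $K$.

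For $j\neq j'$, the variables $g_j^l$ and $g_{j'}^l$ are independent standard normals, so $Z_l=g_j^l g_{j'}^l$ has mean zero and is a product of two subgaussian variables. A standard fact is that such a product is subexponential: by the Cauchy--Schwarz-type bound $\|XY\|_{\psi_1}\leq \|X\|_{\psi_2}\|Y\|_{\psi_2}$, so $\|Z_l\|_{\psi_1}$ is bounded by a universal constant. For $j=j'$, we have $Z_l=(g_j^l)^2-1$, a centered chi-square increment, which is likewise subexponential with a universal $\psi_1$-norm (this is the classical tail $\mathbb P(|(g_j^l)^2-1|\geq u)\leq 2e^{-cu}$ for $u\geq 1$).

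With $Z_l$ centered, independent, and $\|Z_l\|_{\psi_1}\leq K$, Bernstein's inequality for subexponential variables (see e.g.\ Vershynin, \emph{High-Dimensional Probability}, Thm.~2.8.1) gives
\begin{equation*}
\mathbb P\left(\left|\frac{1}{N}\sum_{l=1}^N Z_l\right|\geq t\right)\leq 2\exp\left(-cN\min\left\{\frac{t^2}{K^2},\frac{t}{K}\right\}\right),
\end{equation*}
which is exactly the first bound claimed, with $C=K$. The second bound then follows immediately by a union bound over the at most $m^2$ pairs $(j,j')\in[m]^2$: the probability that the maximum exceeds $t$ is at most $m^2$ times the single-pair bound, and $m^2=e^{2\log m}$ can be absorbed into the exponent, yielding the stated factor $e^{2\log m-cN\min\{t^2/C^2,\,t/C\}}$.

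No step here is really an obstacle; the only thing worth double-checking is that one does not need to split the diagonal and off-diagonal cases separately in the statement (both satisfy the same Bernstein bound, possibly with different numerical constants that can be merged into the single universal $C$). The expression $\min\{t^2/C^2,t/C\}$ precisely captures the two regimes of the subexponential tail (Gaussian behaviour for small $t$, exponential for large $t$), so no further case analysis in $t$ is needed.
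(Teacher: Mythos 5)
Your proof is correct and follows essentially the same route as the paper: the summands are centered subexponential variables (product of independent Gaussians, resp. a centered chi-square increment), Bernstein's inequality for subexponential sums gives the single-pair bound, and a union bound over the $m^2$ pairs gives the second display. The only cosmetic difference is that you split the diagonal and off-diagonal cases explicitly, while the paper handles both at once by citing that the centered product $g_j^l g_{j'}^l-\delta_{jj'}$ is subexponential in either case.
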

\begin{proof}
It is well known that the product of two Gaussian variables $h_l=g_j^lg_{j'}^l$ is a sub-exponential random variable (\cite[Lemma 2.7.7]{Vershynin_book}). That is, there exists a constant $K$ such that for every $t\geq 0$ we have $$\mathbb P\left(|h_l|\geq t\right)\leq 2e^{-\frac{t}{K}}.$$

Moreover, the centered random variable $h_l=g_j^lg_{j'}^l-\mathbb E g_j^lg_{j'}^l=g_j^lg_{j'}^l-\delta_{jj'}$ is also sub-exponential (see \cite[Lemma 2.7.10]{Vershynin_book}). Hence, according to Bernstein's inequality \cite[Corollary 2.8.3]{Vershynin_book}, we deduce that for every $j$, $j'$ and $t\geq 0$ we have
\begin{align*}
\mathbb P\left(\left|\frac{1}{N}\sum_{l=1}^Ng_j^lg_{j'}^l-\delta_{jj'}\right|\geq t\right)=\mathbb P\left(\left|\frac{1}{N}\sum_{l=1}^Nh_l\right|\geq t\right) \leq  2e^{-cN\min\{\frac{t^2}{C^2}, \frac{t}{C}\}}, 
\end{align*}where $c$ and $C$ are universal constants.

The last statement of the proposition follows easily from a union bound.
\end{proof}

Finally, the following result will be also very useful.
\begin{lemma}\label{Auerbach_operator-spaces}
Let $\{b_j\}_{j=1}^d$, $\{b_j^*\}_{j=1}^d$ be an Auerbach basis of a $d$-dimensional operator space $X$ and let $\{c_j\}_{j=1}^d$ be some elements in $X$ such that $$\sup_j\|b_j-c_j\|\leq \frac{\epsilon}{d},$$where $\epsilon \in [0,1)$.

Then, there exists a linear isomorphism $\psi:X\rightarrow X$ satisfying $\psi(b_j)=c_j$ for every $j$ and such that $$\max\{\|\psi\|_{cb}, \|\psi^{-1}\|_{cb}\}\leq \frac{1}{1-\epsilon}.$$
\end{lemma}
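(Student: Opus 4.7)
The plan is to write $\psi$ as a small perturbation of the identity and invert it by a Neumann series. Define the linear map $\psi:X\to X$ by
\[
\psi(x) \;=\; x + \sum_{j=1}^{d} b_j^*(x)\,(c_j - b_j).
\]
The Auerbach biorthogonality $b_i^*(b_j)=\delta_{ij}$ automatically gives $\psi(b_j)=b_j+(c_j-b_j)=c_j$ for every $j$. Writing $\psi=\mathrm{Id}_X+T$ with $T(x)=\sum_j b_j^*(x)(c_j-b_j)$, the whole statement reduces to the single estimate $\|T\|_{cb}\le\epsilon$; once that is available, a standard Neumann series argument delivers both required bounds.

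To control $\|T\|_{cb}$, I would decompose $T$ as the sum of the $d$ rank-one maps $R_j(x)=b_j^*(x)(c_j-b_j)$ and invoke the standard fact that a rank-one map $R(x)=f(x)z$ between operator spaces satisfies $\|R\|_{cb}=\|f\|\cdot\|z\|$. Indeed, the $n$-th amplification $R_n$ sends $M\in M_n(X)$ to $(f(M_{ij}))_{i,j}\otimes z$, an element lying in the isometric copy of $M_n$ inside $M_n(X)$ spanned by $z$, so its norm in $M_n(X)$ equals $\|f_n(M)\|_{M_n}\,\|z\|_X$; combined with $\|f\|_{cb}=\|f\|$ for scalar-valued functionals one obtains the asserted identity. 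Applied to $R_j$, the Auerbach normalization $\|b_j^*\|=1$ and the hypothesis $\|c_j-b_j\|\le\epsilon/d$ give $\|R_j\|_{cb}\le\epsilon/d$, so
\[
\|T\|_{cb} \;\le\; \sum_{j=1}^{d}\|R_j\|_{cb} \;\le\; d\cdot\frac{\epsilon}{d} \;=\; \epsilon.
\]

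Since $\|T\|_{cb}\le\epsilon<1$, the map $\psi=\mathrm{Id}_X+T$ is invertible with $\psi^{-1}=\sum_{k\ge 0}(-T)^k$, and submultiplicativity of the cb norm yields $\|\psi^{-1}\|_{cb}\le(1-\epsilon)^{-1}$, while $\|\psi\|_{cb}\le 1+\epsilon\le (1-\epsilon)^{-1}$ (the last inequality being $1-\epsilon^2\le 1$). I do not foresee any substantial obstacle. The one point worth flagging is the equality $\|f\|_{cb}=\|f\|$ for a scalar-valued functional $f\in X^*$, which is a standard feature of the dual operator space structure. The numerology is also tight: the $1/d$ scaling in the hypothesis exactly matches the $d$ rank-one summands that build $T$, which is why the resulting bound $\epsilon$ is dimension-free.
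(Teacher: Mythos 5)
Your proposal is correct and follows essentially the same route as the paper: write $\psi$ as a perturbation of the identity by the map $\sum_j b_j^*\otimes(c_j-b_j)$, bound its cb norm by $\epsilon$ via the triangle inequality over the $d$ rank-one summands (using $\|b_j^*\|=1$ and $\|b_j-c_j\|\le\epsilon/d$), and invert by a Neumann series. The only difference is that you spell out the rank-one identity $\|f\otimes z\|_{cb}=\|f\|\,\|z\|$, which the paper uses implicitly.
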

\begin{proof}
The case $\epsilon=0$ is trivial, so we assume $\epsilon\in (0,1)$. 
Let us define the linear map $R:X\rightarrow X$ associated to the tensor $R=\sum_{j=1}^db_j^*\otimes (b_j-c_j)$. That is, $R(x)=\sum_{j=1}^db_j^*(x)(b_j-c_j)$ for every $x\in X$. It is clear that $$\|R\|_{cb}\leq \sum_{j=1}^d\|b_j^*\|_{X^*}\|b_j-c_j\|_X\leq \epsilon.$$

Then, we define the linear map $\psi=Id-R=\sum_{j=1}^db_j^*\otimes c_j$, which satisfies that $\psi(b_j)=c_j$ for every $j$ and $$\|\psi\|_{cb}\leq \|Id\|_{cb} +\|R\|_{cb}\leq 1+ \epsilon.$$

Finally, the cb norm of $\psi^{-1}$ can be upper bounded by using the Neumann series:
$$\|\psi^{-1}\|_{cb}=\|(Id-R)^{-1}\|_{cb}\leq \sum_{k=0}^\infty \|R\|^{k}_{cb}\leq \sum_{k=0}^\infty \epsilon^{k}=\frac{1}{1-\epsilon}.$$

Hence, $$\max\{\|\psi\|_{cb}, \|\psi^{-1}\|_{cb}\}\leq \max\left\{1+ \epsilon,\frac{1}{1-\epsilon}\right\}\leq \frac{1}{1-\epsilon}.$$
\end{proof}

We are now ready to prove our main theorem.
\begin{proof}[Proof of Theorem \ref{Key Thm}]
Let $D$ and $m$ be natural numbers such that $D\geq m^2$ and let us define, for every $t,v=1,\cdots, m$, the random element $$f_{tv}=\frac{1}{D^\frac{3}{2}}\sum_{r,s=1}^Dg_{rs}^{tv}e_{rs}\in S_\infty^D,$$where $(g_{rs}^{tv})_{r,s;t,v=1}^{D,m}$ are independent normal real Gaussian variables. If we consider the linear map $u:\ell_2^{m^2}\rightarrow S_\infty^D$, defined by $u(e_{tv})=f_{tv}$ for ever $t,v$, Chevet's inequality (Theorem \ref{Chevet}) allows us to upper bound the expected value
\begin{align}\label{Markov1}
\mathbb E\|u:\ell_2^{m^2}\rightarrow S_\infty^D\|=\frac{1}{D^\frac{3}{2}}\mathbb E \left\|\sum_{t,v=1}^m\sum_{r,s=1}^{D}g_{rs}^{tv}e_{tv}\otimes e_{rs}\right\|_{\ell_2^{m^2}\otimes_{\epsilon}S_\infty^D}\leq \frac{C_1}{D},
\end{align}where $C_1$ is a universal constant.

On the other hand, according to the Eqs. (\ref{eq:lin-min-norm}) and (\ref{min on S_1}), the map $\Phi:S_1^D\rightarrow S_\infty^m$, defined by $$\Phi(x)=\frac{1}{\sqrt{D}}\sum_{r,s=1}^D\sum_{t,v=1}^mg_{rs}^{tv}tr(x^Te_{rs})e_{tv},$$ satisfies
\begin{align}\label{Markov2}
\mathbb E\|\Phi:S_1^D\rightarrow S_\infty^m\|_{cb}=\mathbb E\left\|\frac{1}{\sqrt{D}}\sum_{r,s=1}^D\sum_{t,v=1}^mg_{rs}^{tv}e_{rs}\otimes e_{tv}\right\|_{S_\infty^{Dm}}\leq C_2\sqrt{m},
\end{align}where $C_2$ is a universal constant. Here, we have used that $S_\infty^D\otimes_{min} S_\infty^m=S_\infty^{Dm}$.

According to Markov's inequality, by slightly modifying the constant $C_1$ and $C_2$ above we can assume that the random variables $(g_{rs}^{tv})_{r,s,t,v}$ satisfy
\begin{align}\label{Markov1and2}
\|u:\ell_2^{m^2}\rightarrow S_\infty^D\|\leq \frac{C_1}{D}\hspace{0.3 cm} \text{and}\hspace{0.3 cm}\|\Phi:S_1^D\rightarrow S_\infty^m\|_{cb}\leq C_2\sqrt{m}
\end{align} with probability larger than 1/2.

On the other hand, $$\Phi(f_{tv})=\frac{1}{D^2}\sum_{t',v'=1}^m\left(\sum_{r,s=1}^Dg_{rs}^{tv}g_{rs}^{t'v'}\right)e_{t'v'}.$$

Now, if we consider the Auerbach basis $(e_{tv})_{t,v=1}^m$ of $S_\infty^m$,  we note that 
\begin{align*}
\|\Phi(f_{tv})-e_{tv}\|_{S_\infty^m}&=\left\|\sum_{t',v'=1}^m\left[\left(\frac{1}{D^2}\sum_{r,s=1}^Dg_{rs}^{tv}g_{rs}^{t'v'}\right)-\delta_{tt'}\delta_{vv'}\right]e_{t'v'}\right\|_{S_\infty^m}\\&\leq \sum_{t',v'=1}^m\left |\left(\frac{1}{D^2}\sum_{r,s=1}^Dg_{rs}^{tv}g_{rs}^{t'v'}\right)-\delta_{tt'}\delta_{vv'}\right|.
\end{align*}

Then, according to Proposition \ref{prop_gaussians} there exists a universal constant $c$ such that if $D\geq cm^4\sqrt{\log m}$, we have 
\begin{align*}
\mathbb P\left\{\max_{t,t',u, u'=1,\cdots, m}\left |\frac{1}{D^2}\sum_{r,s=1}^Dg_{rs}^{tv}g_{rs}^{t'v'}-\delta_{tt'}\delta_{vv'}\right|\geq \frac{1}{2m^4}\right\} \leq  \frac{1}{2}.
\end{align*}

Hence, the random variables $(g_{rs}^{tv})_{r,s,t,v}$ satisfy Eq. (\ref{Markov1and2}) and $$\max_{t,t',u, u'=1,\cdots, m}\left |\frac{1}{D^2}\sum_{r,s=1}^Dg_{rs}^{tv}g_{rs}^{t'v'}-\delta_{tt'}\delta_{vv'}\right|\leq \frac{1}{2m^4}$$ with positive probability.

Now, in this case we have 
\begin{align*}
\|\Phi(f_{tv})-e_{tv}\|_{S_\infty^m}&\leq  \frac{1}{2m^2},
\end{align*}so we can apply Lemma \ref{Auerbach_operator-spaces} to deduce the existence of a linear isomorphism $\psi:S_\infty^m\rightarrow S_\infty^m$ such that $\psi(e_{tv})=\Phi(f_{tv})$ for every $t,v=1,\cdots, m$ and $$\max\{\|\psi\|_{cb}, \|\psi^{-1}\|_{cb}\}\leq 2.$$

Let us consider the linear map $\eta:=\psi^{-1}\circ \Phi:S_1^D\rightarrow S_\infty^m$, which satisfies $\eta(f_{tv})=e_{tv}$ for every $t,v=1,\cdots, m$ and $$\|\eta:S_1^D\rightarrow S_\infty^m\|_{cb}\leq \|\psi:S_\infty^m\rightarrow S_\infty^m\|_{cb}\|\Phi:S_1^D\rightarrow S_\infty^m\|_{cb}\leq 2C_2\sqrt{m}.$$

Then, given the element $\tau\in \ell_2^{m^2}\otimes\cdots \otimes \ell_2^{m^2}$ ($k$-times) from  Eq. 
(\ref{element Briet-Vidick}) we consider $$z=(u\otimes \cdots \otimes u)(\tau)\in S_\infty^D\otimes \cdots\otimes S_\infty^D.$$ According to the metric mapping property of the $\epsilon$ norm and  Eq. (\ref{Markov1and2}), we can obtain the upper bound:$$\|z\|_{S_\infty^D\otimes_\epsilon \cdots\otimes_\epsilon S_\infty^D}\leq \|u:\ell_2^{m^2}\rightarrow S_\infty^D\|^k\|\tau\|_{\ell_2^{m^2}\otimes_\epsilon\cdots \otimes_\epsilon \ell_2^{m^2}}\leq \frac{\tilde{C}(k)m(\log m)^{\frac{k}{2}}}{D^k},$$where $\tilde{C}(k)$ is a constant only depending on $k$.

Finally, by the very definition of the min norm we have
\begin{align*}
\left\|\frac{1}{D^k}z\right\|_{S_1^{D}\otimes_{min}\cdots\otimes_{min}S_1^{D}}&\geq \frac{1}{D^k}\frac{1}{\|\eta:S_1^D\rightarrow S_\infty^m\|_{cb}^k}\|(\eta\otimes \cdots \otimes \eta)(z)\|_{S_\infty^m\otimes_{min}\cdots\otimes_{min}S_\infty^m}.
\end{align*}

On the other hand, it follows from the definition of $z$ and $\eta$ that $(\eta\otimes \cdots \otimes \eta)(z)=\tau$, so
\begin{align}\label{explicit quantum states}
\left\|\frac{1}{D^k}z\right\|_{S_1^{D}\otimes_{min}\cdots\otimes_{min}S_1^{D}}\geq \frac{1}{D^k}\frac{1}{\|\eta:S_1^D\rightarrow S_\infty^m\|_{cb}^k}\|\tau\|_{S_\infty^m\otimes_{min}\cdots\otimes_{min}S_\infty^m}\geq \frac{\delta(k)m^{\frac{k}{2}}}{D^k},
\end{align}where here $\delta(k)$ is a constant depending only on $k$.

Hence, we immediately deduce that $$\frac{\left\|\frac{1}{D^k}z\right\|_{S_1^{D}\otimes_{min}\cdots\otimes_{min}S_1^{D}}}{\|z\|_{S_\infty^D\otimes_\epsilon \cdots\otimes_\epsilon S_\infty^D}}\geq \frac{\frac{\tilde{\delta}(k)m^{\frac{k}{2}}}{D^k}}{\frac{\tilde{C}(k)m(\log m)^{\frac{k}{2}}}{D^k}}=C(k)\frac{m^{\frac{k}{2}-1}}{(\log m)^{\frac{k}{2}}}.$$

This concludes the proof.
\end{proof}

It is interesting to look at the elements involved in our construction. In fact, although the quantum XOR game $G$ and the measurements performed in the entangled strategy leading to the quantum bias $\beta^*(G)$ are less intuitive in our proof, the quantum state involved in the construction can be easily understood. In order to simplify notation, let us discuss the particular case $k=3$. The general case is completely analogous. Then, Eq. (\ref{explicit quantum states}) tells us that the elements $|\psi\rangle$ and $|\eta\rangle$ appearing in Lemma \ref{min on S_1} are
\begin{align*}|\psi\rangle=\sum_{i,j,k=1}^{m}g_{i,j,k}|ijk\rangle\hspace{0.3 cm}\text{and}\hspace{0.3 cm}|\eta\rangle=\sum_{i',j',k'=1}^{m}g_{i',j',k'}|i'j'k'\rangle;
\end{align*}so they are explicit realizations of independent random Gaussian variables. Of course the operator $|\psi\rangle\langle\eta|$ is not a state in the same way as the element $z$ is not a game (none of them are self-adjoint). Then, after modifying these these elements to make them self-adjoint, it is not difficult to see that the (unnormalized) state we can use in our entangled strategy is $|\phi\rangle\langle\phi|$, where $$|\phi\rangle=\sum_{i,j,k=1}^{m}g_{i,j,k}|ijk\rangle\otimes |00\rangle+\sum_{i',j',k'=1}^{m}g'_{i',j',k'}|i'j'k'\rangle\otimes |11\rangle\in \ell_{2}^{m^3}\otimes \ell_2^{2}\otimes \ell_2^{2}.$$

Hence, we are using essentially ``standard'' random states. We see once more that random quantum states present an extreme behavior in the same way as shown in many other contexts (see for instance \cite{CoNe16, FKNV, HLW06, Pal18}).

Let us finally mention that, although we have said that the game and the observables are less intuitive in our construction, one could find the precise elements that we have used and which are also defined as an explicit realizations of independent random Gaussian variables. Indeed, as we have just done to study the quantum state involved in our construction, we should first identify the corresponding mathematical objects in the proof of Theorem \ref{Key Thm} and, then, modify them to get some extra properties. For instance, it is not difficult to see that, in order to define the game $G$, we need to consider the element $$z=\sum_{r,s,r',s',r'',s''=1}^N\left(\sum_{i,j,k,i',j',k'=1}^{m}g_{i,j,k}g'_{i',j',k'}\tilde{g}_{rs}^{ii'}\tilde{g}_{r's'}^{jj'}\tilde{g}_{r''s''}^{kk'}\right)e_{rr'r''}\otimes e_{ss's''}\in S_1^{N^3},$$where the families $\{g_{i,j,k}\}_{i,j,k}$, $\{g'_{i,j,k}\}_{i,j,k}$ and $\{\tilde{g}_{rs}^{tu}\}_{r,s;t,u}$ are all independent.

Then, we should make this element self-adjoint (following for instant the same procedure we used at the beginning of the proof of Theorem \ref{Main Thm}) and, finally, we should normalize the corresponding element so that it has 1-Schatten norm equal one. This will give us an element $G$ which corresponds to a $k$- partite quantum XOR games. Note that, in order to obtain the states $\rho_x$'s and the corresponding probability distribution $p$ which define the game, we need to analyze the spectral decomposition of $G$.

\section{Acknowledgments}

We would like to thank L. Lami for many helpful discussions on LOCC operations. This work is partially based on some discussions during the online workshop ``Non-local games in quantum information theory'' from May 17 to May 21, 2021, sponsored by AIM and the NSF.

\end{document}